% ****** Start of file apssamp.tex ******
%
%   This file is part of the APS files in the REVTeX 4.2 distribution.
%   Version 4.2a of REVTeX, December 2014
%
%   Copyright (c) 2014 The American Physical Society.
%
%   See the REVTeX 4 README file for restrictions and more information.
%
% TeX'ing this file requires that you have AMS-LaTeX 2.0 installed
% as well as the rest of the prerequisites for REVTeX 4.2
%
% See the REVTeX 4 README file
% It also requires running BibTeX. The commands are as follows:
%
%  1)  latex apssamp.tex
%  2)  bibtex apssamp
%  3)  latex apssamp.tex
%  4)  latex apssamp.tex
%
\documentclass[%
 reprint,
superscriptaddress,
%groupedaddress,
%unsortedaddress,
%runinaddress,
%frontmatterverbose, 
%preprint,
%preprintnumbers,
nofootinbib,
%nobibnotes,
%bibnotes,
 %aip,
 %jmp,
 amsmath,amssymb,
 aps,
%pra,
%prb,
%rmp,
%prstab,
%prstper,
%floatfix,
 %numerical
]{revtex4-2}

\usepackage[english]{babel}
\usepackage[utf8]{inputenc}
\usepackage{setspace}
\usepackage{pgfplots}
\usepgfplotslibrary{fillbetween} 
\pgfplotsset{compat=newest}
\graphicspath{ {./figures/} }
\usepackage{caption}
\usepackage{subcaption}
\usepackage{amsthm}
\usepackage{braket}
\usepackage{dsfont}
\usepackage{xcolor}
\usepackage{mathrsfs} %For \mathscr
\usepackage{tikz}
\usetikzlibrary{calc}
\usetikzlibrary{arrows}
\usepackage{tcolorbox}
\usepackage{enumitem}
\newlist{steps}{enumerate}{1}
\setlist[steps, 1]{label = Step \arabic*:}
\usepackage[colorinlistoftodos,prependcaption]{todonotes}
%\addbibresource{LatexBib.bib} %Import the bibliography file

% Theorem-like environments
\newtheorem{theorem}{Theorem}[section]

\newtheorem{thm}[theorem]{Theorem}

\newtheorem{lem}[theorem]{Lemma}
\newtheorem{prop}[theorem]{Proposition}

\usepackage{mathtools}
\DeclarePairedDelimiter\ceil{\lceil}{\rceil}

\newcommand{\ketbra}[2]{\ket{#1}\!\bra{#2}}

\newcommand\norm[1]{\lVert#1\rVert}

%\usepackage{lineno}
%\linenumbers

%%%%%% Bibliography %%%%%%
% Replace "sample" in the \addbibresource line below with the name of your .bib file.
%\usepackage[style=ieee, 
%citestyle=numeric-comp,
%sorting=none]{biblatex}
%\addbibresource{apssamp.bib}
  \newcommand{\SubItem}[1]{
    {\setlength\itemindent{15pt} \item[-] #1}}

\usepackage{graphicx}
\usepackage{dcolumn}% Align table columns on decimal point
\usepackage{bm}% bold math
%\usepackage{hyperref}% add hypertext capabilities
%\usepackage[mathlines]{lineno}% Enable numbering of text and display math
%\linenumbers\relax % Commence numbering lines

%\usepackage[showframe,%Uncomment any one of the following lines to test 
%%scale=0.7, marginratio={1:1, 2:3}, ignoreall,% default settings
%%text={7in,10in},centering,
%%margin=1.5in,
%%total={6.5in,8.75in}, top=1.2in, left=0.9in, includefoot,
%%height=10in,a5paper,hmargin={3cm,0.8in},
%]{geometry}

\begin{document}

%\preprint{One-Way Hashing Method with Finite Resources}

%\title{One-Way Hashing Method with Finite Resources}
%\title{Deterministic entanglement distillation with finite resource}
%\title{Deterministic entanglement distillation with finite atomic qubits}
\title{Deterministic high-rate entanglement distillation with neutral atom arrays}

% Force line breaks with \\

%\author[1]{Thomas A. Hahn}
%\author[2]{Ryan White}
%\author[3]{Hannes Bernien}
%\author[1]{Rotem Arnon-Friedman}

%%%%%% Affiliations %%%%%%
%\affil[1]{The Center for Quantum Science and Technology, Department of Physics of Complex Systems, Weizmann Institute of Science, Rehovot, Israel}
%\affil[2]{
%Department of Physics, University of Chicago, Chicago, IL 60637, USA}
%\affil[3]{Pritzker School of Molecular Engineering, University of Chicago, Chicago, IL 60637, USA}

\author{Thomas A. Hahn}
 \email{thomas.hahn@weizmann.ac.il}%Lines break automatically or can be forced with \\
 \affiliation{%
 The Center for Quantum Science and Technology \\ Department of Physics of Complex Systems, Weizmann Institute of Science, Rehovot, Israel
 }
 
\author{Ryan White}
 \affiliation{Department of Physics, University of Chicago, Chicago, IL 60637, USA}
 
\author{Hannes Bernien}
 \affiliation{Pritzker School of Molecular Engineering, University of Chicago, Chicago, IL 60637, USA}
 
\author{Rotem Arnon-Friedman}
 \affiliation{%
 The Center for Quantum Science and Technology \\ Department of Physics of Complex Systems, Weizmann Institute of Science, Rehovot, Israel
 }%

\date{\today}% It is always \today, today,
             %  but any date may be explicitly specified

\begin{abstract}
The goal of an entanglement distillation protocol is to convert large quantities of noisy entangled states into a smaller number of high-fidelity Bell pairs. The celebrated one-way hashing method is one such protocol, and it is known for being able to efficiently and deterministically distill entanglement in the asymptotic limit, i.e., when the size of the quantum system is very large. In this work, we consider setups with finite resources, e.g., a small fixed number of atoms in an atom array, and derive lower bounds on the distillation rate for the one-way hashing method. We provide analytical as well as numerical bounds on its entanglement distillation rate -- both significantly tighter than previously known bounds. We then show how the one-way hashing method can be efficiently implemented with neutral atom arrays. The combination of our theoretical results and the  experimental blueprint we provide indicate that a full coherent implementation of the one-way hashing method is within reach with state-of-the-art quantum technology.
\end{abstract}

%\keywords{Suggested keywords}%Use showkeys class option if keyword
                              %display desired
\maketitle

%\tableofcontents

\section{\label{sec:level1}Introduction}
 Long-distance quantum networks require high fidelity quantum entanglement as a key ingredient~\cite{kimble2008quantum,simon2017towards,wehner2018quantum,cirac1997quantum,briegel1998quantum,sangouard2011quantum,pompili2021realization}. In particular, each node in the network needs to share high-quality entangled states with all other nodes. 
        This can be achieved using two steps. The first step is the initial distribution of entanglement between the faraway nodes. The form of the distribution step depends on the used technology and, fundamentally, ends with \emph{noisy} entangled states between the nodes. For example, a quantum network might consist of individually-trapped atoms within high-finesse optical cavities, connected with a network of optical fibers~\cite{kimble2008quantum, rempenetwork2012}. Remote entanglement can be generated via coherent exchange of single photons, but these operations are fundamentally imperfect. The ultimate entanglement fidelity will be limited by any number of noise sources which scramble the quantum state of the atoms and/or the exchanged photons.

        The second step is to apply an \emph{entanglement distillation protocol} (EDP)-- a concept first presented in the 90's~\cite{Bennett_1996,Bennett1996MixedstateEA} (originally termed entanglement purification). The goal of an EDP is to transform the noisy entanglement between the nodes to a high quality one. 
        Let us consider the most simple case, in which two nodes share a state of the form $\rho_{AB}^{\otimes n}$, i.e., $n$ independent and identically distributed (IID) copies of some state $\rho_{AB}$, such that 
        \begin{eqnarray}
          \operatorname{F}\left( \rho_{AB} ,\ketbra{\phi^{+}}{\phi^{+}}_{AB}\right) = 1-\epsilon \;,
        \end{eqnarray}
        where $\operatorname{F}$ is the fidelity and $ \ket{\phi^{+}} := \frac{1}{\sqrt{2}} \left[ \ket{00} + \ket{11}\right]$ is a maximally entangled state, also called a Bell state~\cite{nielsen_chuang_2010}.
        The goal of an EDP is then to transform the input state~$\rho^{\otimes n}_{AB}$ to $m\leq n$ Bell states with higher fidelity~\cite{Bennett_1996,Bennett1996MixedstateEA,khatri2020principles}.
        During the protocol the nodes can apply \emph{local quantum operations}, i.e., each node can act only on its part of the state. In addition, they can communicate classically. 
        After a successful execution of an EDP, the nodes can use their high-fidelity entanglement for the application of their  choice over the quantum network. 
        Thus, being able to apply a good EDP is mandatory for any functional quantum network. 
        
        \begin{figure}[h]\centering 
\includegraphics[scale=0.8]{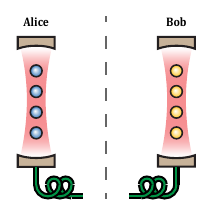}
\caption{During the protocol, each party only has access to their part of the initial quantum state.}
\label{fig:setupmodel}
\end{figure}
        Noisy Bell states are often parameterized via depolarizing noise, e.g.\
        \begin{eqnarray} \label{1Bellpair}
              \rho_{AB}:= W\ketbra{\phi^{+}}{\phi^{+}}+\frac{1-W}{4}\mathds{1}_{4} \; ,
        \end{eqnarray} 
        which models the noise as effectively being uniformly random.\footnote{Our analysis below is valid for more general states, in which the initial global state is Bell diagonal. We present the above model of many copies that are each Bell diagonal for simplicity.} 
        %For systems which do not produce qubit pairs of this form, one may apply twirling operators~\cite{Bennett1996MixedstateEA}.  
        For many independent copies of the state in Eq.~(\ref{1Bellpair}), entanglement can be distilled via the famous one-way hashing method~\cite{Bennett_1996,Bennett1996MixedstateEA}. 
        The protocol is presented in Fig.\  \ref{HashingMethod} for completeness.
        From an experimental point of view, the hashing protocol is very appealing: 
        Apart from the post-processing step, this protocol is state-independent, i.e. the applied unitaries and measurements are independent of the input state.\footnote{This is in stark contrast to most recurrence protocols~\cite{Bennett1996MixedstateEA, D_r_2007} and protocols based on fixed error correcting codes~\cite{Roque:2023qcj}.} Moreover, for any number of initial copies, the protocol is easily expressed by elementary one- and two-qubit gates.
        
        In addition, the protocol has the benefit of being very efficient at large scales; for sufficiently large $n$ IID copies of some state $\rho_{AB}$, the one-way hashing method can produce approximately 
        \begin{equation} \label{Eq: AssymptoticBellPairs}
          m \approx \left[1-\operatorname{H}(AB)_\rho\right]\cdot n =- \operatorname{H}(A|B)_\rho \cdot n 
        \end{equation}
        Bell pairs, where $\operatorname{H}(A|B)_\rho$ is the conditional von Neumann entropy~\cite{Bennett_1996,Bennett1996MixedstateEA}; this approximation is tight in the limit $n \to \infty$.
        
        In all of the presented equations, $n$ is the number of the initial weakly-entangled pairs shared by the nodes. In the case of an architecture based on atom arrays, for example, $n$ is (at most) the number of atoms in each array. With experimental feasibility in mind, considering the performance of the protocol only for large $n$ is not enough-- scalability issues of current quantum technologies prevent us from creating systems in which $n$ is very large. Thus, it is of key importance to understand how many Bell pairs can be produced via the one-way hashing method for small values of $n$.
        
        While the aim of the original papers~\cite{Bennett_1996,Bennett1996MixedstateEA} presenting the hashing method was to primarily show the asymptotic behavior of the protocol, their methods can in principle be used to bound the number of Bell states that can be produced for any value $n$. Building on the techniques from~\cite{Bennett_1996,Bennett1996MixedstateEA}, a thorough analysis on the achievable distillation rate, i.e.\ the number of produced output pairs~$m$ divided by the number of initial copies~$n$, for finite~$n$ was derived in~\cite{Zwerger_2018}. Although ~\cite{Bennett_1996,Bennett1996MixedstateEA, Zwerger_2018} all produce the same asymptotic rate, for reasonable choices of initial depolarizing noise, their analysis can only guarantee a non-trivial distillation rate once $n \sim 100$, which is significantly beyond current system sizes.

        %While the aim of~\cite{Bennett_1996,Bennett1996MixedstateEA} was to primarily show the asymptotic behavior of the one-way hashing method, their methods can in principle be used to bound the number of Bell states that can be produced for any value $n$. Building on the methods from~\cite{Bennett_1996,Bennett1996MixedstateEA}, a thorough analysis on the achievable distillation rate, i.e.\ the number of produced output pairs $m$ divided by the number of initial copies~$n$, for finite $n$ can be found in~\cite{Zwerger_2018}. Although their results can reproduce the asymptotic behavior from~\cite{Bennett_1996,Bennett1996MixedstateEA}, for reasonable choices of initial depolarizing noise, their analysis can only guarantee a non-trivial distillation rate once $n \sim 100$ (See Figure \ref{fig:Introduction}).

In this work, we give significantly improved analytical and numerical bounds on the number of Bell pairs that can be produced via the one-way hashing method in the limit of finite resources. We show that the number of  qubit pairs needed to distill a single, highly entangled, Bell pair can be significantly reduced to $n\sim 10$. This puts practical EDPs within reach of current devices. We give a blueprint for implementing the protocol on neutral atom arrays which have emerged as one of the leading platforms for quantum information processing and quantum networks~\cite{Bluvstein2024,Covey2023}.

%Moreover, our results show that the hashing method can, in fact, distill entanglement for small $n\sim 10$!  

\begin{figure}[h]\centering 
\subfloat[Lower Bound on Rate]{\label{2a}
\includegraphics[scale=0.8]{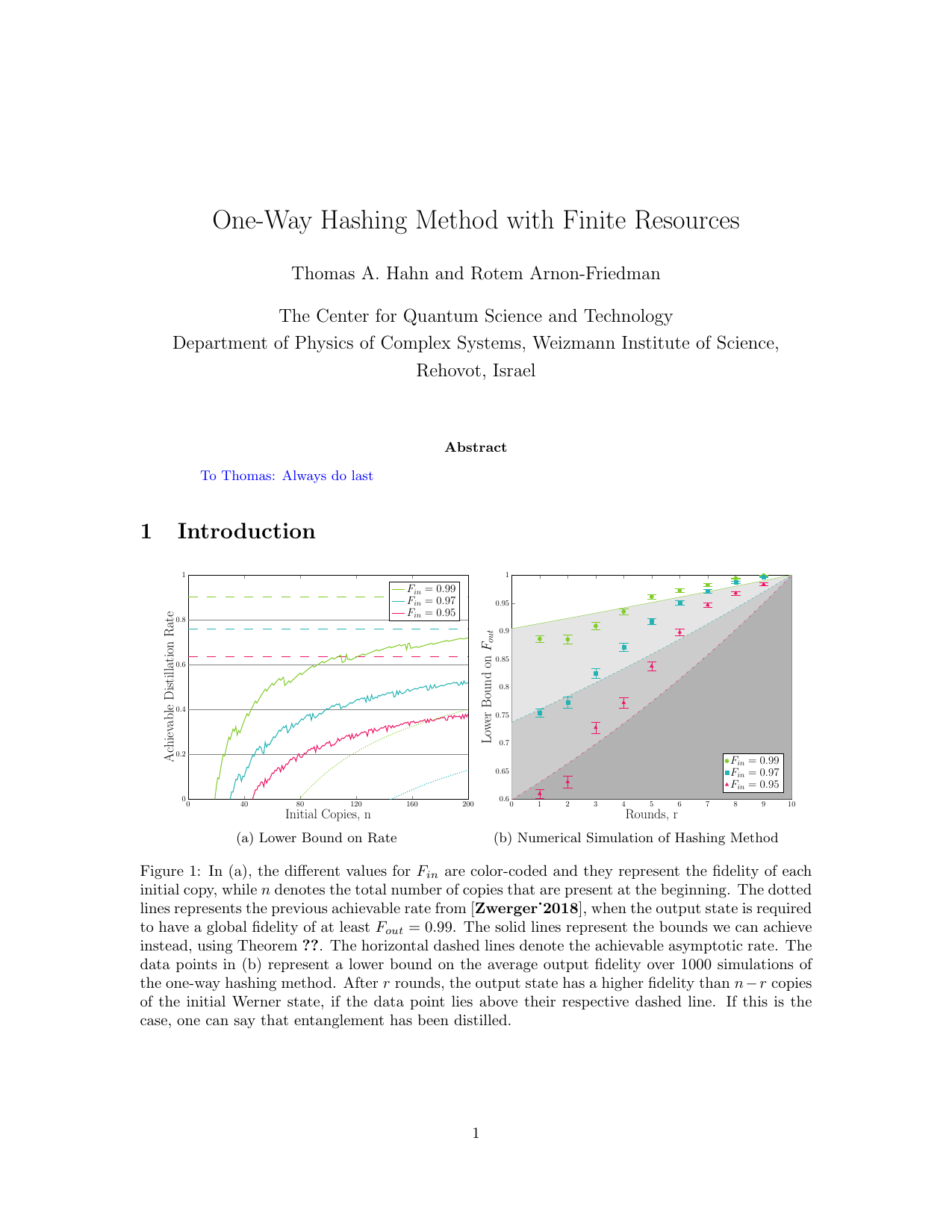}
} \\%\hfill
\hspace*{-0.22cm}
\subfloat[Numerical Simulation of Hashing Method]{\label{b}
\includegraphics[scale=0.8]{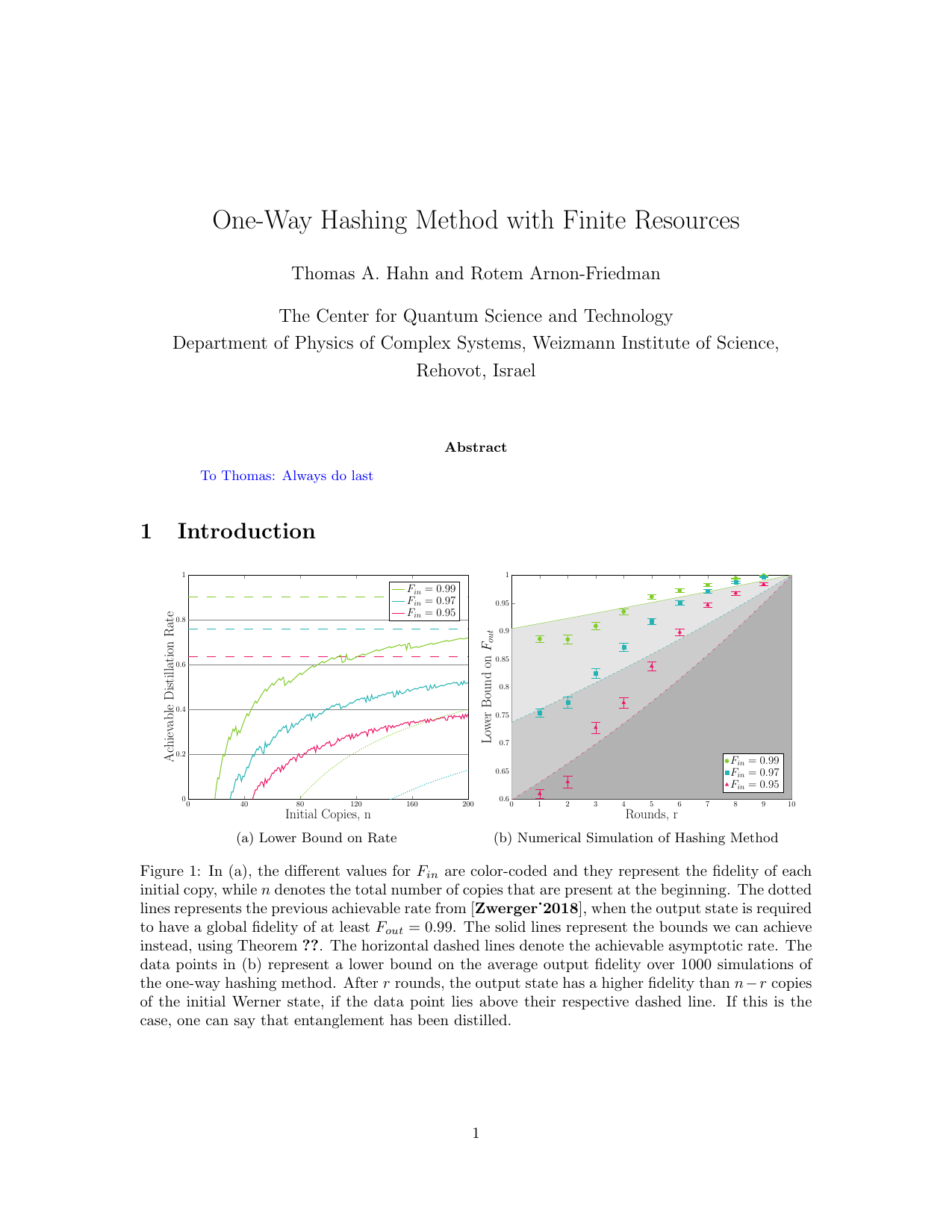}
}
\caption{In (a), the dotted lines represent previous results from~\cite{Zwerger_2018}, and solid lines represent the bounds we achieve in this work. The data points in (b) are average output fidelities over $1000$ simulations of the one-way hashing method for $n=10$ initial copies. Entanglement is distilled, once they cross their respective dashed lines.}
\label{fig:Introduction}
\end{figure} 

Fig.\ \ref{fig:Introduction} compares our results to the previous works. In Fig.\ \ref{fig:Introduction}~(a) we present the bounds on the achievable distillation rate from~\cite{Zwerger_2018} (dotted lines) and our analytical results (solid lines). One can clearly see that, for realistic input fidelities (describing the noise level of the system), the required initial copies reduces significantly from $n \sim 80-140$ to $n \sim 20-30$, and the distillation rate improves accordingly. The horizontal dashed lines represent the asymptotic value, which can be derived using Eq.\ (\ref{Eq: AssymptoticBellPairs}). The numerical data in Fig.\ \ref{fig:Introduction}~(b) indicates that entanglement can in fact be distilled for even smaller values of~$n$. More plots elucidating the strength of our work are available in Section \ref{Results}.

The results presented in this work indicate that a first proof-of-principle \emph{high-rate} entanglement distillation protocol can be achieved with state of the art technology, such as atom arrays. Atom array systems have demonstrated raw Bell state fidelities of 98\% and two-qubit CZ gate fidelities of 99.5\%~\cite{lukinhighfidelity2023}, coupling of multiple atoms to an optical network~\cite{rempenetwork2012, vuletic5atom2024}, and array sizes of thousands of atoms~\cite{endres6100atoms2024}. However, no atom array experiment has demonstrated an EDP, largely because it is difficult to achieve the necessary level of control over a large number of qubits ($n$ in our context).
%Thus, it should already be experimentally feasible to run an EDP with a small number of remotely entangled pairs, or a larger number of locally entangled pairs.
There has been an experimental demonstration of an EDP with NV center qubits~\cite{hansondistillation2017}, using two entangled pairs and the recurrence method, which does not boast the distillation rate of the one-way hashing method. A realization of the one-way hashing method (presented in Fig.\ \ref{HashingMethod}) would thus represent a significant step towards the construction of large-scale quantum networks. By showing that this distillation method is practical for fewer numbers of initial entangled pairs, our work suggests that such a realization is well within reach of modern experiments.

In the following section we explain our findings in more detail-- we present the theoretical results (with complete proofs in the Methods section) and then suggest an experimental setup for implementing the distillation protocol using atom array technology.

\section{Results}\label{Results}
\subsection{Theoretical results}\label{Theory}
In this section, we  generally consider Bell-diagonal states,\footnote{These are states that can be expressed as convex combinations of tensor products of Bell states.} which include states such as 
\begin{align} \label{Eq: ManyIIDCopies}
    \rho_{A^nB^n}= \left(W\ketbra{\phi^{+}}{\phi^{+}}+\frac{1-W}{4}\mathds{1}_{4} \right)^{\otimes n} \; .
\end{align}
For Bell-diagonal states, the noise on the global system is described by the distribution of its eigenvalues. The larger an eigenvalue is, the more probable it is that the error, which is associated to it, occurs. As was originally shown in~\cite{Bennett_1996,Bennett1996MixedstateEA}, the one-way hashing method is specifically tailored for Bell-diagonal states. Moreover, it can correct such errors precisely because the noise is described classically and the corresponding error can be expressed as multiple copies of Bell states.

If one considers the state described by Eq.\ (\ref{Eq: ManyIIDCopies}), for example, as long as the noise per entangled qubit pair is not too large, many of the eigenvalues will be so close to zero, they can effectively be ignored. In other words, one can neglect the eigenvalues associated to higher order errors, which occur at a significantly smaller rate than others.

 Informally, the goal of this work is to find tight bounds on the set of relevant errors,\footnote{This set has the property that the probability of an error, which is not included in this set, occurring is close to zero.} using only entropic quantities. Once we manage to characterize the size of this set, we use this quantity to generate lower bounds on the number of Bell pairs that can be distilled using the one-way hashing method.
 To do this, one uses the fact that, after every round, roughly half of the errors in this set will not be compatible with the observed measurement outcomes, i.e.\ the protocol reduces the number of possible errors by approximately a factor of two. We then require that the protocol only stops when one error remains in this set. Once this happens, both parties 

\begin{tcolorbox}[size=small,title=One-Way Hashing Method: (Alice and Bob initially share n qubit pairs)] 
Round $k+1$: Alice and Bob share $n-k$ qubit pairs
\begin{steps} [leftmargin=1.4cm] 
  \item Alice and Bob generate a uniformly random bitstring $S = s_{1}\cdots s_{n-k}$ where each $s_{j}$ represents two bits. Let $s_{j^{\star}}$ represent the first non-zero 2-bit string.%\footnote{If no such 2-bit string exists, one can either generate a new bitstring $S$ or randomly select a value for $j^{\star}$, and jump directly to Step 4.}
  \item For all $j \in \{ 1,\dots, n-k \}$, if 
 \SubItem{$s_{j} = 10$, then both Alice and Bob apply a $\pi/2$-rotation around the y-axis on their half of the $j$'th qubit pair.}
   \SubItem{$s_{j} = 11$, then Alice and Bob, respectively, apply a $3\pi/2$- and $\pi/2$-rotation around the x-axis on their half of the $j$'th qubit pair.}
  \SubItem{$s_{j} = 00$ or $s_{j} = 01$, no actions are required at Step 2.}
  \item For all $j \neq j^\star$ s.t. $s_{j} \neq 00$, both Alice and Bob apply a CNOT gate on qubit pairs $j$ and $j^\star$, where pair $j^\star$ contains the target qubits.
  \item Alice and Bob measure qubit pair $j^\star$ in the computational basis and discard said pair. Alice broadcasts her measurement outcomes to Bob.
\end{steps}
Post-processing: After all rounds are concluded, Alice and Bob share a Bell-diagonal state. Based on the initial (pre-protocol) bipartite state and all past joint measurement outcomes, Bob applies single-qubit Pauli gates such that the largest eigenvalue of the post-processed state corresponds to multiple copies of the Bell state $\ketbra{\phi^{+}}{\phi^{+}}_{AB}$. 
\end{tcolorbox}
\noindent\begin{minipage}{\columnwidth}
\captionof{figure}{The one-way hashing method.}\label{HashingMethod}
\end{minipage}
have successfully detected the error, and they can then go on to correcting it, using only Pauli gates (See the post-processing step presented in Fig.\ \ref{HashingMethod}).

We discuss this now more formally. Let $\mathcal{H}_{A^nB^n} $ represent the Hilbert space of the two parties, i.e.\ each party holds $n$ qubits, and let $\rho_{A^nB^n} \in S_{=} \left(\mathcal{H}_{A^nB^n}  \right)$ be a (normalized) bipartite density matrix, that is diagonal in the Bell basis. Furthermore, let us denote by $P_{X^n}$ the distribution of the eigenvalues of $\rho_{A^{n}B^{n}}$. The Hartley entropy of $P_{X^n}$ is defined as~\cite{Tomamichel2015QuantumIP,Tomamichel2015QuantumIParticle}
 \begin{align}
     H_{0}\left(X^n\right)_P := \log_{2} |\{x: P_{X^n}(x) >0\}| \; .
 \end{align}
Given an initial state $\rho_{A^nB^n}$, let $m^\epsilon$ be the maximal number Bell pairs, up to a negligible (global) error $\epsilon$, that can be created via the one-way hashing method. The rank of the distribution $P_{X^n}$  gives us information as to how many errors need to be corrected such that the final state is pure. The lower bound in Eq.\ (\ref{nontightbound}) then follows from 
 \begin{figure*}[t]
 \centering
\subfloat[Lower Bound on Rate]{\label{2a}
\includegraphics[scale=.73]{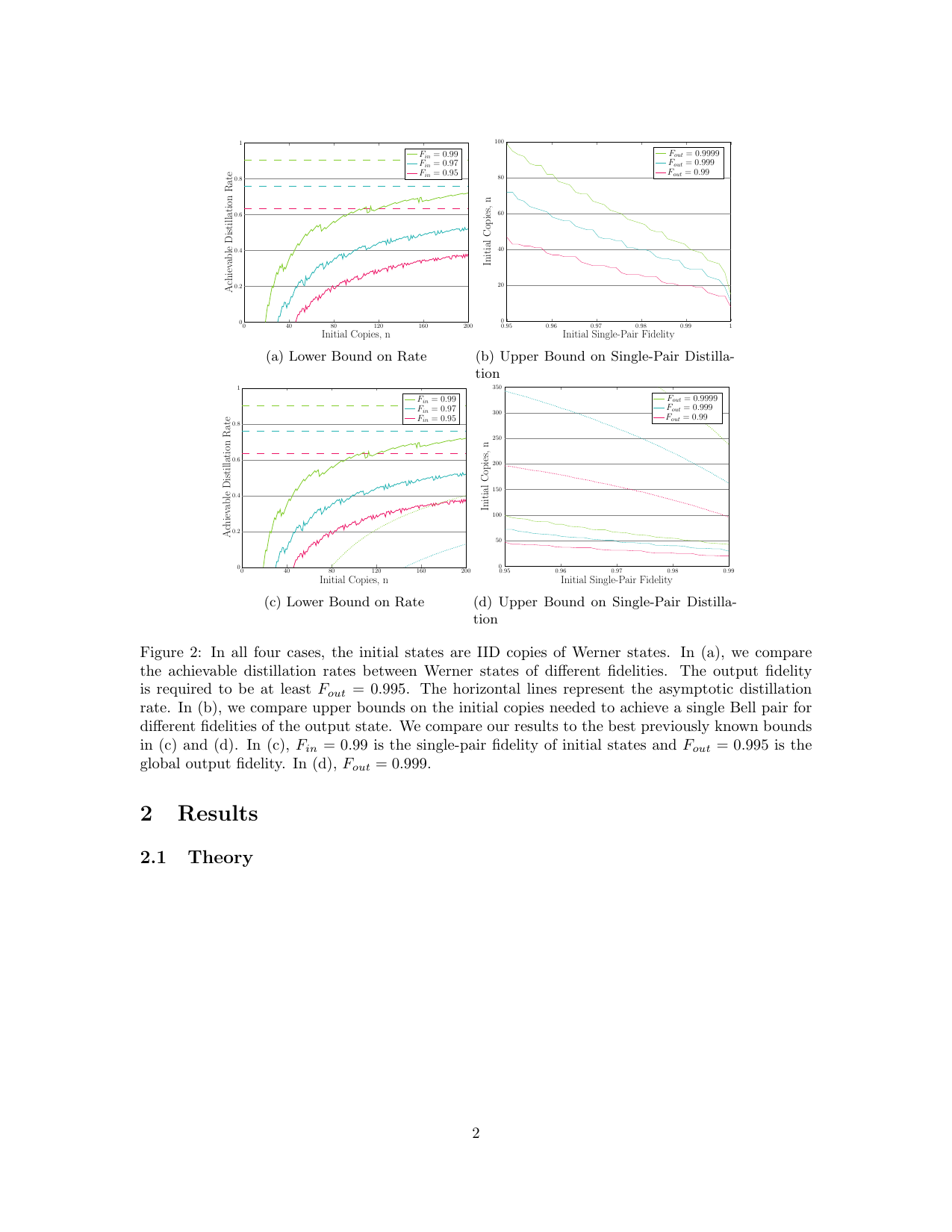}
}%\hfill
\subfloat[Upper Bound on Single-Pair Distillation]{\label{2b}
\includegraphics[scale=.73]{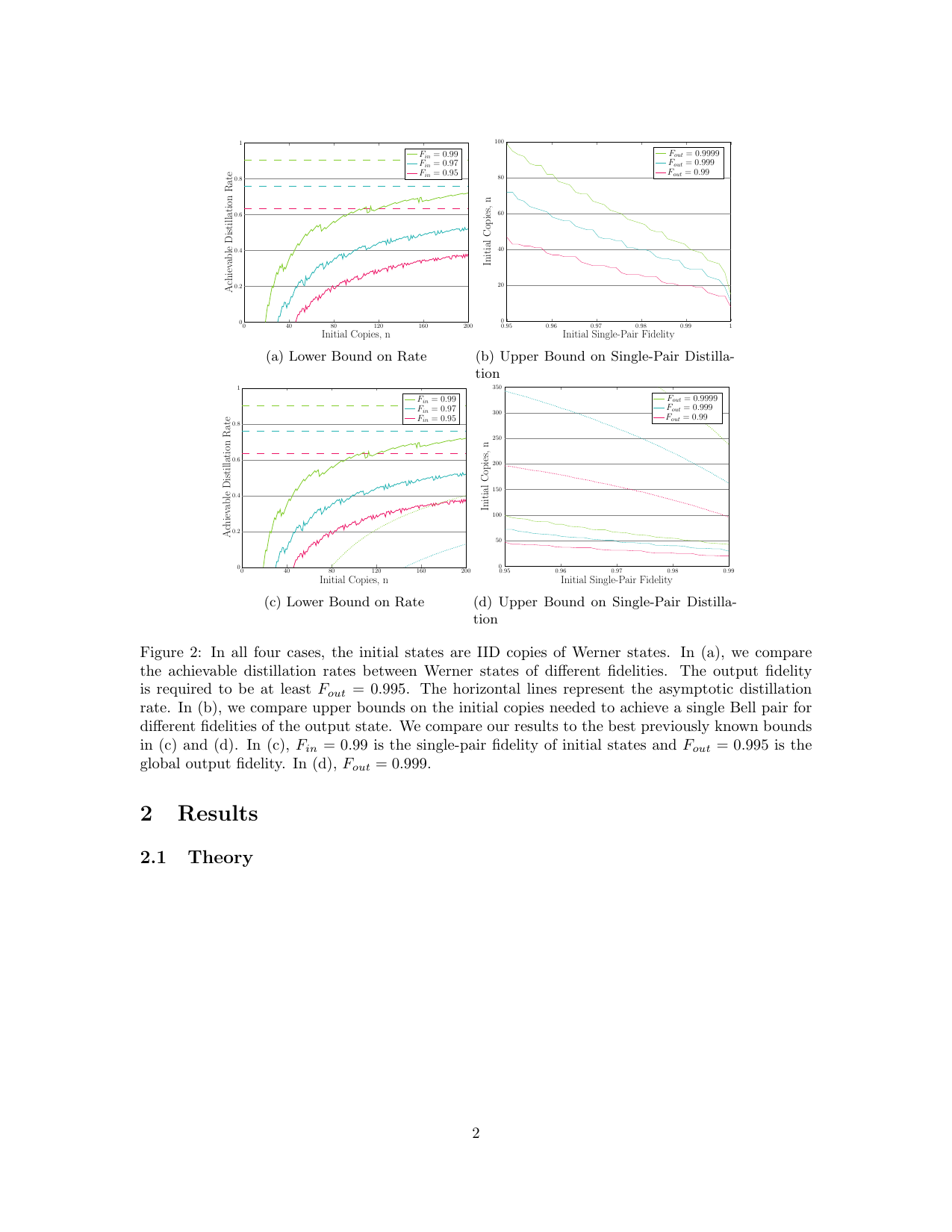}
}%\par 
\subfloat[Comparison: Rate]{\label{2c}
\includegraphics[scale=.73]{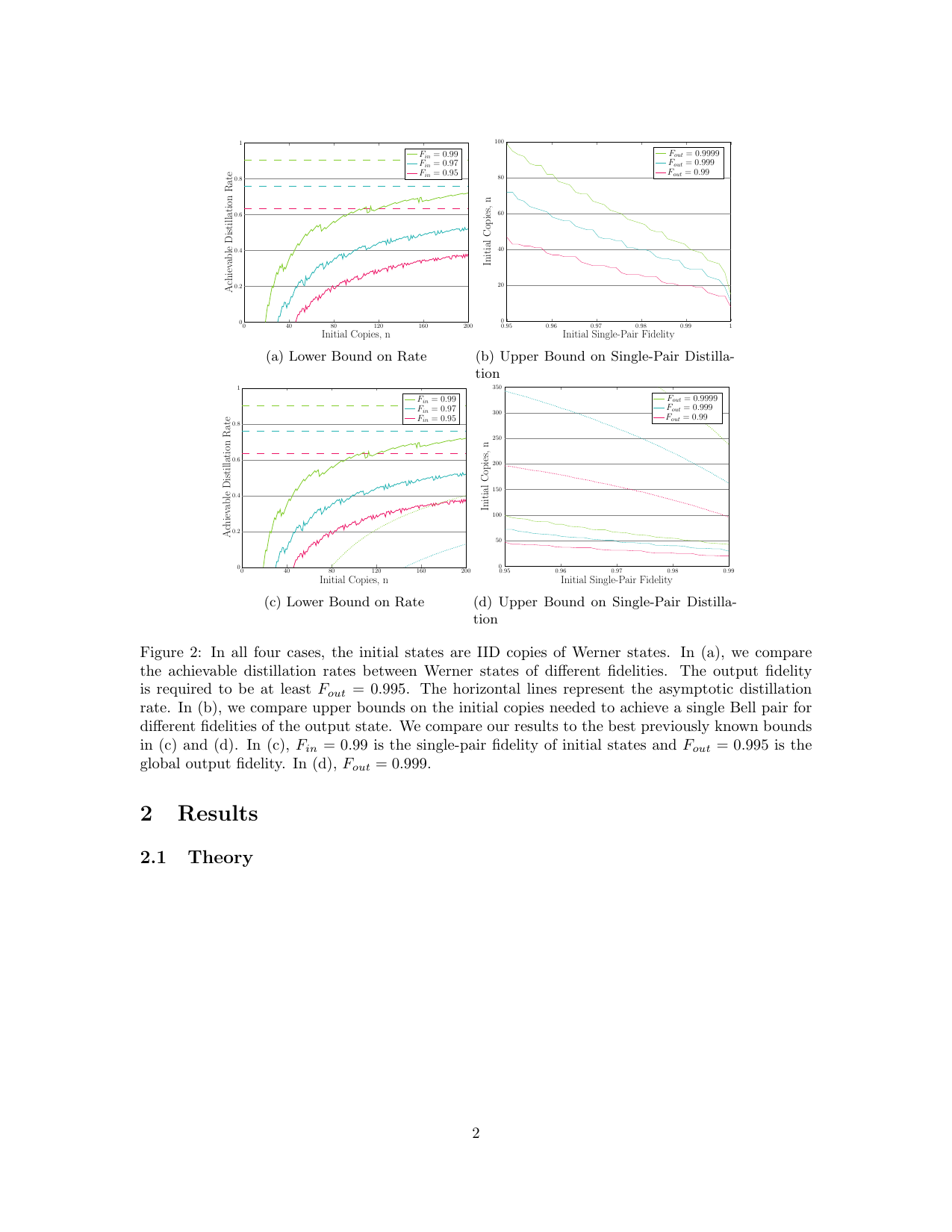}
}%\hfill
\subfloat[Comparison: Single-Pair Distillation]{\label{2d}
\includegraphics[scale=.73]{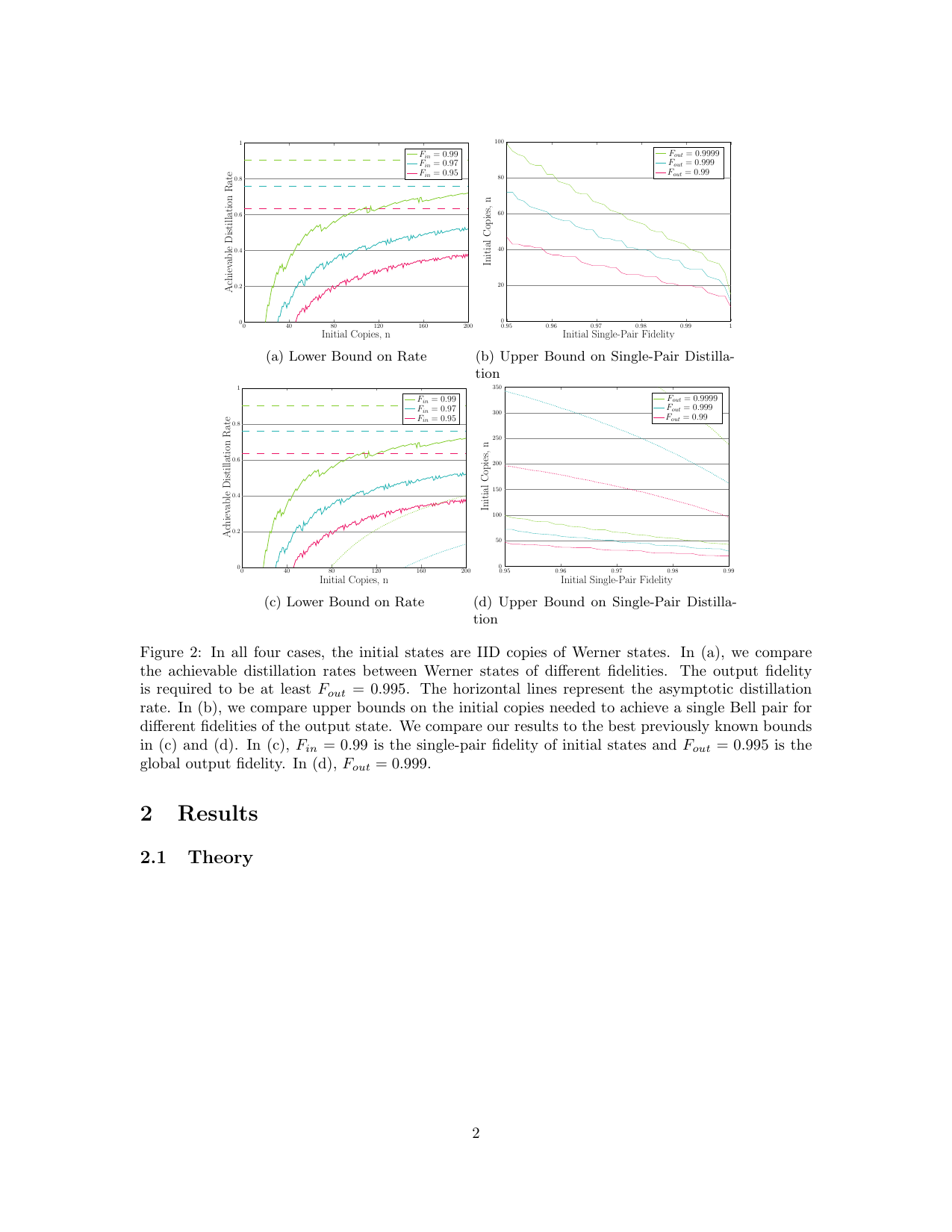}
}
\caption{In all subfigures, the initial state is $n$ IID copies of Werner states. In (a), we compare the achievable distillation rates between Werner states of different fidelities. The output fidelity is required to be at least $F_{out} = 0.99$. The horizontal lines represent the asymptotic distillation rate. In (b), we compare upper bounds on the initial copies needed to achieve a single Bell pair for different fidelities of the output state. We compare our results to the best previously known bounds in (c) and (d).}
\label{fig:analytics}
\end{figure*} 

\noindent the fact that, during each round of the one way-hashing method, the number of errors essentially reduces by a factor of two. (A proof is given in the Methods section).

\begin{equation} \label{nontightbound}
    m^\epsilon \geq  n-\lceil H_{0}(X^{n})_{P}-2\log_{2}\left(\epsilon\right) \rceil \; .
\end{equation}
To see why Eq.\ (\ref{nontightbound}) is in general \emph{not tight}, note that $n$ IID copies of states in the form of Eq.\ (\ref{1Bellpair}) will have maximal rank for $W<1$ and Eq.\ (\ref{nontightbound}) then yields negative lower bounds. It is not required for the hashing method to correct all errors, though. Rather, it is sufficient for the protocol to just correct the most prevalent errors. In this case, the Hartley entropy cannot be used to quantify the number of errors that need to be corrected and it has to be replaced with the so-called \textit{smooth} Hartley entropy, $H_{0}^{\epsilon}$; a formal definition is given in Methods. In a similar vein to Eq.\ (\ref{nontightbound}), $m^\epsilon$ can be lower bounded using the smooth Hartley entropy.
\begin{thm}  \label{outputtightbound}
Let $\rho_{A^nB^n} \in S_{=} \left(\mathcal{H}_{A^nB^n}  \right)$ be a normalized Bell-diagonal density matrix, whose eigenvalues are described by a probability distribution $P_{X^n}$. For all $\epsilon_1,\epsilon_2 >0$ that satisfy $\epsilon_{1}+\epsilon_{2} \leq \epsilon$, 
\begin{equation}
    m^\epsilon \geq  n-\lceil H_{0}^{\epsilon_{1}}(X^{n})_{P}-2\log_{2}\left(\epsilon_{2}\right) \rceil \; .
\end{equation}
\end{thm}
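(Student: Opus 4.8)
The plan is to \emph{reduce} Theorem~\ref{outputtightbound} to the non-smooth bound of Eq.~(\ref{nontightbound}), combined with two elementary facts: that the ball defining the smooth Hartley entropy can be realised by a truncate-and-renormalise operation on the eigenvalue distribution, and that the one-way hashing protocol, viewed as a quantum channel, is contractive on states. First I would unpack the smoothing. By definition of $H_0^{\epsilon_1}$, there is a \emph{normalized} state $\hat\rho_{A^nB^n}$ within distance $\epsilon_1$ of $\rho_{A^nB^n}$ (in the metric used in Methods to define $H_0^{\epsilon}$, i.e.\ purified distance; any contractive metric obeying the triangle inequality would do) whose eigenvalue distribution $\hat P_{X^n}$ is supported on exactly $2^{H_0^{\epsilon_1}(X^n)_P}$ outcomes. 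Concretely, one may take $\hat P_{X^n}$ to be $P_{X^n}$ restricted to its $2^{H_0^{\epsilon_1}(X^n)_P}$ most likely outcomes and renormalised; the discarded mass is what controls the distance $\epsilon_1$. Crucially, $\hat\rho_{A^nB^n}$ is again a normalized Bell-diagonal state, so Eq.~(\ref{nontightbound}) applies to it verbatim with $H_0(X^n)_{\hat P}=H_0^{\epsilon_1}(X^n)_P$.

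Hence, running the hashing method on $\hat\rho_{A^nB^n}$ for $r:=\lceil H_0^{\epsilon_1}(X^n)_P-2\log_2(\epsilon_2)\rceil$ rounds (we may assume $r\le n$, otherwise the claimed bound is vacuous) produces an output $\epsilon_2$-close to $m:=n-r$ copies of $\ketbra{\phi^{+}}{\phi^{+}}$. If one does not wish to invoke the Methods proof of Eq.~(\ref{nontightbound}) as a black box, the same estimate — each round is consistent with any fixed wrong error with probability $\tfrac12$, so the expected number of surviving wrong candidates among the $2^{H_0^{\epsilon_1}}$ errors is at most $2^{H_0^{\epsilon_1}-r}\le\epsilon_2^{2}$, and the factor $2$ in front of $\log_2\epsilon_2$ is the price of passing from a failure \emph{probability} to a purified distance — can be run directly on the truncated support.

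Next I would transfer the conclusion from $\hat\rho_{A^nB^n}$ back to $\rho_{A^nB^n}$ by data processing. For a fixed number of rounds $r$ and a fixed choice of outcome-conditioned Pauli post-processing, the whole protocol — shared randomness generation, the conditional one- and two-qubit gates, the measurements and their classical recording, and the final Paulis — is a fixed quantum channel $\mathcal{E}$ on the input register. Since purified distance is non-increasing under channels, $\mathcal{E}(\rho_{A^nB^n})$ is $\epsilon_1$-close to $\mathcal{E}(\hat\rho_{A^nB^n})$, which by the previous paragraph is $\epsilon_2$-close to $\ketbra{\phi^{+}}{\phi^{+}}^{\otimes m}$; the triangle inequality then makes $\mathcal{E}(\rho_{A^nB^n})$ be $(\epsilon_1+\epsilon_2)\le\epsilon$-close to $m$ Bell pairs. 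Because the post-processing actually prescribed by the protocol (align the \emph{largest} eigenvalue of the true output with $\ketbra{\phi^{+}}{\phi^{+}}^{\otimes m}$) is precisely the fidelity-maximising Pauli correction, it performs at least as well, outcome by outcome, as the particular $\mathcal{E}$ used above; hence $m^\epsilon\ge m=n-\lceil H_0^{\epsilon_1}(X^n)_P-2\log_2(\epsilon_2)\rceil$, and monotonicity of $m^\epsilon$ in $\epsilon$ gives the statement for every $\epsilon\ge\epsilon_1+\epsilon_2$.

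The only place requiring genuine care is the interface between the two halves: one must verify that the metric in the definition of $H_0^{\epsilon_1}$, the error metric in which $m^\epsilon$ is defined, and the metric used to prove Eq.~(\ref{nontightbound}) all coincide (purified distance), and that it is a metric that is both contractive under CPTP maps and subadditive, so that the errors add as $\epsilon_1+\epsilon_2$ and not worse. A secondary subtlety, already flagged above, is that the protocol's Paulis are selected from the \emph{actual} initial state while the reduction naturally selects them from $\hat\rho_{A^nB^n}$; this is harmless for Bell-diagonal states since the prescribed rule dominates any other Pauli choice. If instead one runs the direct counting argument on the truncated support, the analogous nuisance is bookkeeping the small ($\le\epsilon_1$-weight) mass outside the smoothing set — such errors can be consistent with the observed parities and even outweigh the true error, but their total contribution to the infidelity is bounded by the discarded mass, so the constants still close up to the claimed $\epsilon_1+\epsilon_2$.
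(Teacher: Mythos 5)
Your proposal is correct and follows essentially the same route as the paper: replace $\rho$ by a normalized low-rank Bell-diagonal state within purified distance $\epsilon_1$ (the paper's Lemma~\ref{NormalizedOptimizer} plus the truncate-and-renormalise construction of Lemma~\ref{OptProbHartley}), apply the non-smooth analysis (Proposition~\ref{BenDistillationError}) to get $\epsilon_2$-closeness after $r=\lceil H_0^{\epsilon_1}(X^n)_P-2\log_2(\epsilon_2)\rceil$ rounds, and transfer back via contractivity of the purified distance under the protocol channel together with the triangle inequality. Your explicit remark about the outcome-conditioned Pauli correction being chosen from $\rho$ rather than from the smoothed state is a point the paper leaves implicit, but it does not change the argument.
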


For $n$ IID copies of some state $\rho_{AB}$ as an initial state, the distillation rate, $R^\epsilon$, of the one-way hashing method is defined as the ratio between the number of Bell pairs the protocol outputs and $n$, i.e.
\begin{equation}
    R^\epsilon := \frac{m^\epsilon}{n} \; .
\end{equation}
The rate should be thought of as a ``measure" for how much entanglement can be extracted from each of the $n$ copies of $\rho_{AB}$, and it can be lower bounded using Theorem \ref{ratetightbound}.
\pagebreak
\begin{thm}[Rate lower-bound]  \label{ratetightbound}
Let ${\rho_{A^nB^n} \in S_{=} \left(\mathcal{H}_{A^nB^n}  \right)}$ be a normalized Bell-diagonal density matrix, whose eigenvalues are described by a probability distribution $P_{X^n}$. For all $\epsilon_1,\epsilon_2 >0$ that satisfy $\epsilon_{1}+\epsilon_{2} \leq \epsilon$, the rate of the one-way hashing method can be lower bounded by
\begin{align}
    R^\epsilon \geq \frac{n-\ceil*{ H_{0}^{\epsilon_{1}}(X^{n})_{P}-2\log_{2}\left(\epsilon_{2}\right) }}{n} \; .
\end{align}
\end{thm}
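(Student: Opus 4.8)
The plan is to obtain Theorem~\ref{ratetightbound} directly from Theorem~\ref{outputtightbound}, which is already in hand. By hypothesis the initial state is $n$ IID copies of a fixed single-pair state $\rho_{AB}$, so $n$ is a fixed positive integer and the rate is $R^\epsilon = m^\epsilon/n$. Theorem~\ref{outputtightbound} gives, for every $\epsilon_1,\epsilon_2>0$ with $\epsilon_1+\epsilon_2\le\epsilon$, the bound $m^\epsilon \ge n-\ceil*{H_0^{\epsilon_1}(X^n)_P-2\log_2(\epsilon_2)}$. Dividing both sides by $n>0$ preserves the inequality and reproduces exactly the claimed bound on $R^\epsilon$. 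At this level the argument is a single line; all of the real content lives in Theorem~\ref{outputtightbound}, which I am entitled to assume here.

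For completeness I sketch how I would argue that underlying theorem, since the rate statement inherits everything from it. First I would unpack the smooth Hartley entropy: $H_0^{\epsilon_1}(X^n)_P$ is the infimum of $\log_2|\{x: Q(x)>0\}|$ over all (sub)normalized $Q$ that are $\epsilon_1$-close to $P_{X^n}$ in the relevant distance. Concretely, $2^{H_0^{\epsilon_1}}$ is, up to rounding, the size of the smallest set $\mathcal{E}$ of Bell-type error patterns carrying probability mass at least $1-\epsilon_1$ under $P_{X^n}$; all errors outside $\mathcal{E}$ are simply ignored, at the price of a global error $\epsilon_1$.

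The core is then the hashing/error-counting argument of \cite{Bennett_1996,Bennett1996MixedstateEA,Zwerger_2018}, run with this smooth quantity in place of the Hartley entropy. Each round measures a uniformly random $\mathbb{Z}_2$-parity of the surviving error string, so conditioned on any outcome the number of patterns in $\mathcal{E}$ still consistent with the measurement record shrinks by a factor of two in expectation. Running $r$ rounds leaves about $|\mathcal{E}|/2^r$ candidates; forcing a unique survivor (so that Bob's Pauli post-processing corrects deterministically) needs $r\approx\log_2|\mathcal{E}|=H_0^{\epsilon_1}$ rounds. The extra $-2\log_2(\epsilon_2)$ is the slack that upgrades ``halves in expectation'' to ``unique with probability at least $1-\epsilon_2$'', via a Markov/union bound over rounds. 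Each round consumes one pair, so $m^\epsilon\ge n-r=n-\ceil*{H_0^{\epsilon_1}-2\log_2\epsilon_2}$, and the truncation error $\epsilon_1$ and the hashing-failure error $\epsilon_2$ combine by a triangle inequality to total error $\le\epsilon_1+\epsilon_2\le\epsilon$.

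The genuine obstacle --- for Theorem~\ref{outputtightbound}, not for the rate corollary --- is making the probabilistic step rigorous: the count of surviving errors is not cleanly binomial, since the random parities are reused within a round through the CNOT fan-in and correlations accumulate across rounds, so one needs a careful concentration/union argument rather than the naive ``exactly halving'' heuristic. One must also verify that Bell-diagonality is preserved throughout, so that the unique-error state really is a product of $n-r$ Bell states mapped by local Paulis to $\bigl(\ketbra{\phi^{+}}{\phi^{+}}\bigr)^{\otimes(n-r)}$. For the statement actually requested there is no further obstacle: invoke Theorem~\ref{outputtightbound} and divide by $n$.
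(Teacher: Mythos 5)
Your proposal is correct and matches the paper's own route: the paper likewise obtains the rate bound by dividing the $m^\epsilon$ bound of Theorem~\ref{outputtightbound} by $n$, with all the substantive work (smoothing to a minimal-rank nearby state, the hashing round count $r=\ceil*{H_0^{\epsilon_1}-2\log_2\epsilon_2}$, and the triangle inequality combining $\epsilon_1$ and $\epsilon_2$) living in that underlying theorem. Your accompanying sketch of that theorem also tracks the paper's argument, up to the minor point that with the purified distance the truncated set must carry mass $1-\epsilon_1^2$ rather than $1-\epsilon_1$.
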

To derive explicit bounds on the distillation rate via Theorem \ref{ratetightbound}, one first needs to calculate $H_{0}^{\epsilon_{1}}(X^{n})_{P}$. We do this similarly to~\cite{RenWol05}, which uses a slightly different definition for the smooth Hartley entropy.\footnote{They use the generalized trace distance.}
%, as opposed to the generalized fidelity
In particular, we show that the smooth Hartley entropy is equivalent to the following discrete optimization problem.

%\pagebreak
\begin{lem} \label{OptProbHartley}
Let $\mathcal{X}^n$ be a finite set and let $P_{X^n}(x)$ be a normalized probability distribution on $\mathcal{X}^n$. Then $H_{0}^{\epsilon} \left(X^n\right)_P$ is equal to
\begin{equation} \label{optprobHart}
 \begin{aligned}
\min_{k} \quad & \log_{2} \left(k\right) \\
\textrm{s.t.} \quad & \sum_{x \in \mathcal{I}_{k}} P_{X^n}\left(x\right) \geq 1 - \epsilon^2 \; ,
\end{aligned}
\end{equation}
where $P_{X^n}\left(x_{1}\right), \dots,  P_{X}\left(x_{k}\right)$ are the $k$ largest weights of the distribution and $\mathcal{I}_{k} =\{ x_{1}, \dots,  x_{k}\}$.
\end{lem}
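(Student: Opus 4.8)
The plan is to unfold the definition of the smooth Hartley entropy and reduce the minimization over the $\epsilon$-ball to a purely combinatorial choice of support. Recall that $H_{0}^{\epsilon}(X^n)_P = \min_{Q} H_{0}(X^n)_Q$, where the minimum ranges over all (sub)normalized distributions $Q$ on $\mathcal{X}^n$ whose distance from $P_{X^n}$ is at most $\epsilon$ and where $H_{0}(X^n)_Q = \log_2 |\{x : Q(x) > 0\}|$. The distance here is the purified distance, which for classical distributions equals $\sqrt{1 - F(Q,P_{X^n})^2}$ with fidelity $F(Q,P_{X^n}) = \sum_{x} \sqrt{Q(x)P_{X^n}(x)}$ (this is exactly the choice that produces the constant $1-\epsilon^2$ rather than $1-\epsilon$ as in~\cite{RenWol05}). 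Since $Q$ enters the objective only through the size of its support, the natural guess is that an optimal $Q$ keeps the heaviest weights of $P_{X^n}$ and renormalizes; the two inequalities below make this precise.

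For the inequality $H_{0}^{\epsilon}(X^n)_P \le \log_2 k$, where $k$ is the optimum of the program in Eq.\ (\ref{optprobHart}), I would take $\mathcal{I}_k$ to be an index set of the $k$ largest weights, so that $P_{X^n}(\mathcal{I}_k) := \sum_{x\in\mathcal{I}_k} P_{X^n}(x) \ge 1-\epsilon^2$, and define $Q$ by $Q(x) = P_{X^n}(x)/P_{X^n}(\mathcal{I}_k)$ for $x \in \mathcal{I}_k$ and $Q(x) = 0$ otherwise. A one-line computation gives $F(Q,P_{X^n}) = \sum_{x\in\mathcal{I}_k}\sqrt{Q(x)P_{X^n}(x)} = \sqrt{P_{X^n}(\mathcal{I}_k)} \ge \sqrt{1-\epsilon^2}$, hence the purified distance $\sqrt{1-F(Q,P_{X^n})^2} \le \epsilon$, so $Q$ lies in the $\epsilon$-ball; since $|\mathrm{supp}(Q)| = k$ this shows $H_{0}^{\epsilon}(X^n)_P \le \log_2 k$.

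For the reverse inequality, I would take an arbitrary $Q$ in the $\epsilon$-ball, set $\mathcal{S} = \mathrm{supp}(Q)$ and $k' = |\mathcal{S}|$, and bound the fidelity by Cauchy--Schwarz: $F(Q,P_{X^n}) = \sum_{x\in\mathcal{S}}\sqrt{Q(x)}\sqrt{P_{X^n}(x)} \le \big(\sum_{x\in\mathcal{S}}Q(x)\big)^{1/2}\big(\sum_{x\in\mathcal{S}}P_{X^n}(x)\big)^{1/2} \le P_{X^n}(\mathcal{S})^{1/2}$, using that $Q$ is (sub)normalized. Membership in the ball gives $F(Q,P_{X^n})^2 \ge 1-\epsilon^2$, hence $P_{X^n}(\mathcal{S}) \ge 1-\epsilon^2$. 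Since the $k'$ heaviest weights have at least as much total mass as any set of size $k'$, the set $\mathcal{I}_{k'}$ is also feasible for Eq.\ (\ref{optprobHart}), so by minimality of $k$ we get $k' \ge k$ and therefore $H_{0}(X^n)_Q = \log_2 k' \ge \log_2 k$. Taking the minimum over $Q$ yields $H_{0}^{\epsilon}(X^n)_P \ge \log_2 k$, which together with the previous paragraph gives equality.

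I expect the only delicate point to be bookkeeping around the precise definition of the smooth entropy given in Methods: which metric is used for smoothing (the purified distance, responsible for $\epsilon^2$) and whether the smoothing set consists of normalized or subnormalized distributions. In the subnormalized case one should write the generalized fidelity $F(Q,P_{X^n}) = \sum_x \sqrt{Q(x)P_{X^n}(x)} + \sqrt{(1-\sum_x Q(x))(1-\sum_x P_{X^n}(x))}$, but since $P_{X^n}$ is normalized the extra term vanishes and all of the above goes through verbatim; in particular the optimal $Q$ can be taken normalized, as the explicit construction shows. No compactness subtlety arises, since by this reduction it suffices to optimize over the finitely many normalized restrictions of $P_{X^n}$ to subsets of the finite set $\mathcal{X}^n$.
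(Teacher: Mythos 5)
Your proof is correct. The upper-bound direction coincides with the paper's: both construct the renormalized restriction of $P_{X^n}$ to the $k$ heaviest outcomes and verify it lies in the $\epsilon$-ball (your fidelity convention, $\operatorname{F}=\sum_x\sqrt{PQ}$ with distance $\sqrt{1-\operatorname{F}^2}$, yields the same purified distance as the paper's squared convention, so the $1-\epsilon^2$ threshold matches). The lower-bound direction is where you genuinely diverge. The paper argues by contradiction: it asserts that the fidelity-maximizing distribution of rank below $m^\star$ may be taken to be supported on the $m^\star-1$ heaviest outcomes, solves the resulting constrained maximization explicitly with Lagrange multipliers, and finds the optimal fidelity to be $\sum_{x=1}^{m^\star-1}P_X(x)<1-\epsilon^2$. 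You instead take an arbitrary $Q$ in the ball and apply Cauchy--Schwarz to get $\bigl(\sum_x\sqrt{Q(x)P(x)}\bigr)^2\le P_{X^n}(\operatorname{supp}Q)$, so any feasible $Q$ must have a support carrying mass at least $1-\epsilon^2$, whence $|\operatorname{supp}Q|\ge k$ by minimality. Your route is more elementary and arguably tighter as an argument: it avoids the Lagrangian computation entirely and, more importantly, does not need the paper's unproven (though true) reduction that the optimal support consists of the heaviest elements --- you only compare an arbitrary support to the heaviest set of the same size, which is immediate. Both approaches buy the same conclusion; the paper's has the minor side benefit of exhibiting the exact optimizer and optimal fidelity value, while yours gives a shorter and more self-contained proof.
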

Fig.\ \ref{2a} compares bounds on the distillation rate given by Theorem \ref{ratetightbound} for $n$ IID copies of states in the form of Eq.\ (\ref{1Bellpair}). Theorem \ref{ratetightbound} also implicitly gives an upper bound on the number of copies needed to produce a single, high-fidelity Bell pair.\footnote{The upper bound is the minimal $n$ needed s.t. the bound on the distillation rate is positive.} These upper bounds are 
\begin{figure*}\centering
\subfloat[Hashing Method for $n=6$]{\label{a}
\includegraphics[scale=.8]{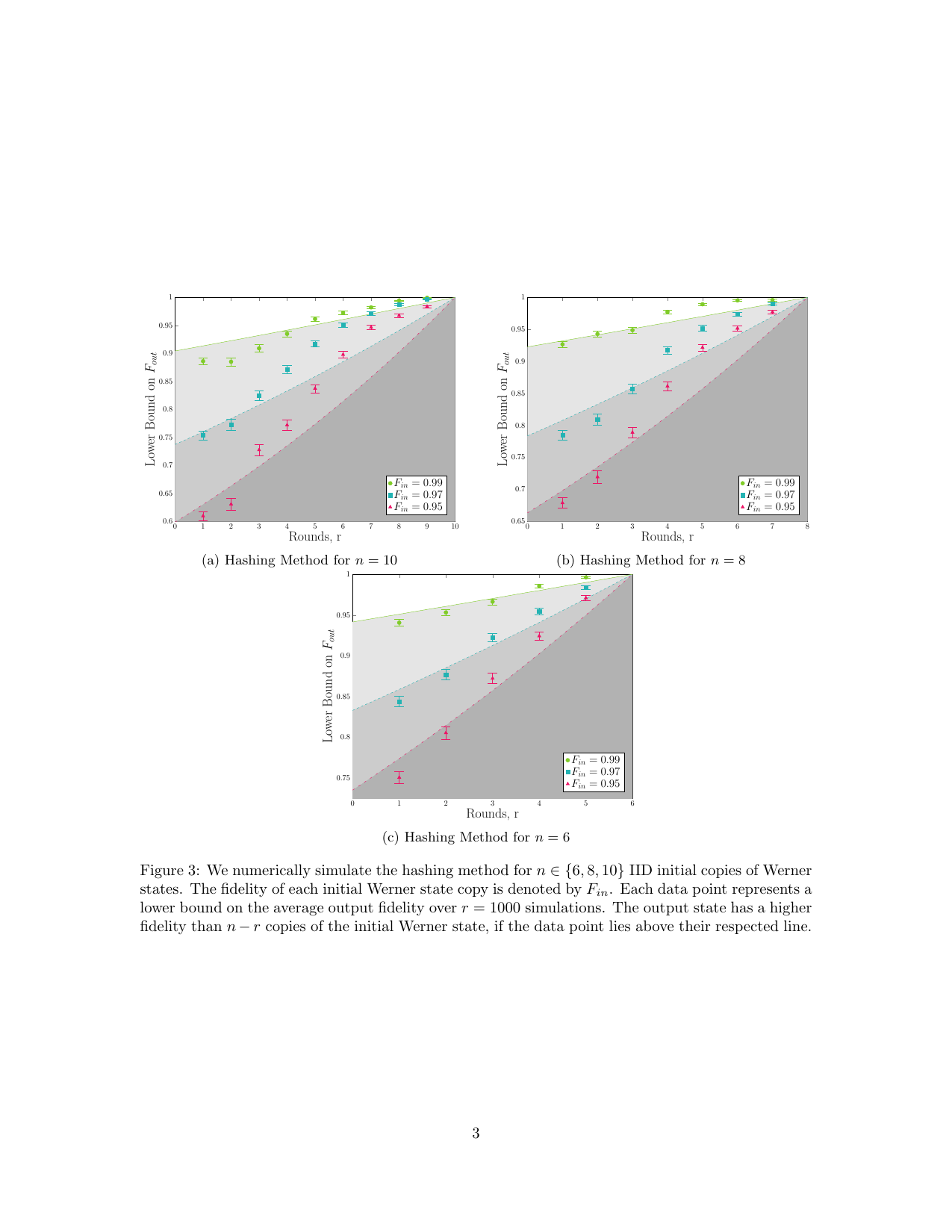}
}%\hfill
\subfloat[Hashing Method for $n=8$]{\label{b}
\includegraphics[scale=.8]{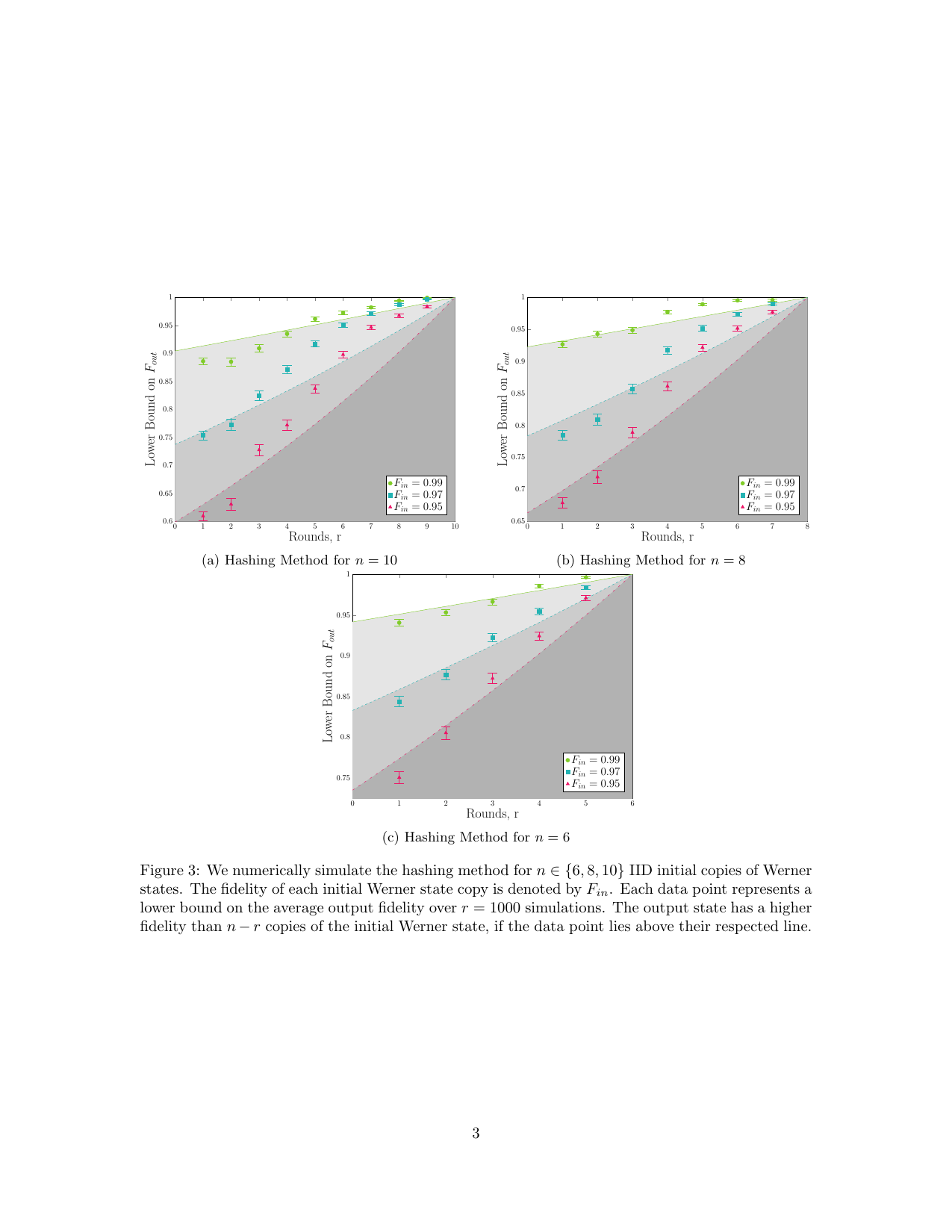}
}%\par 
\subfloat[Hashing Method for $n=10$]{\label{c}
\includegraphics[scale=.8]{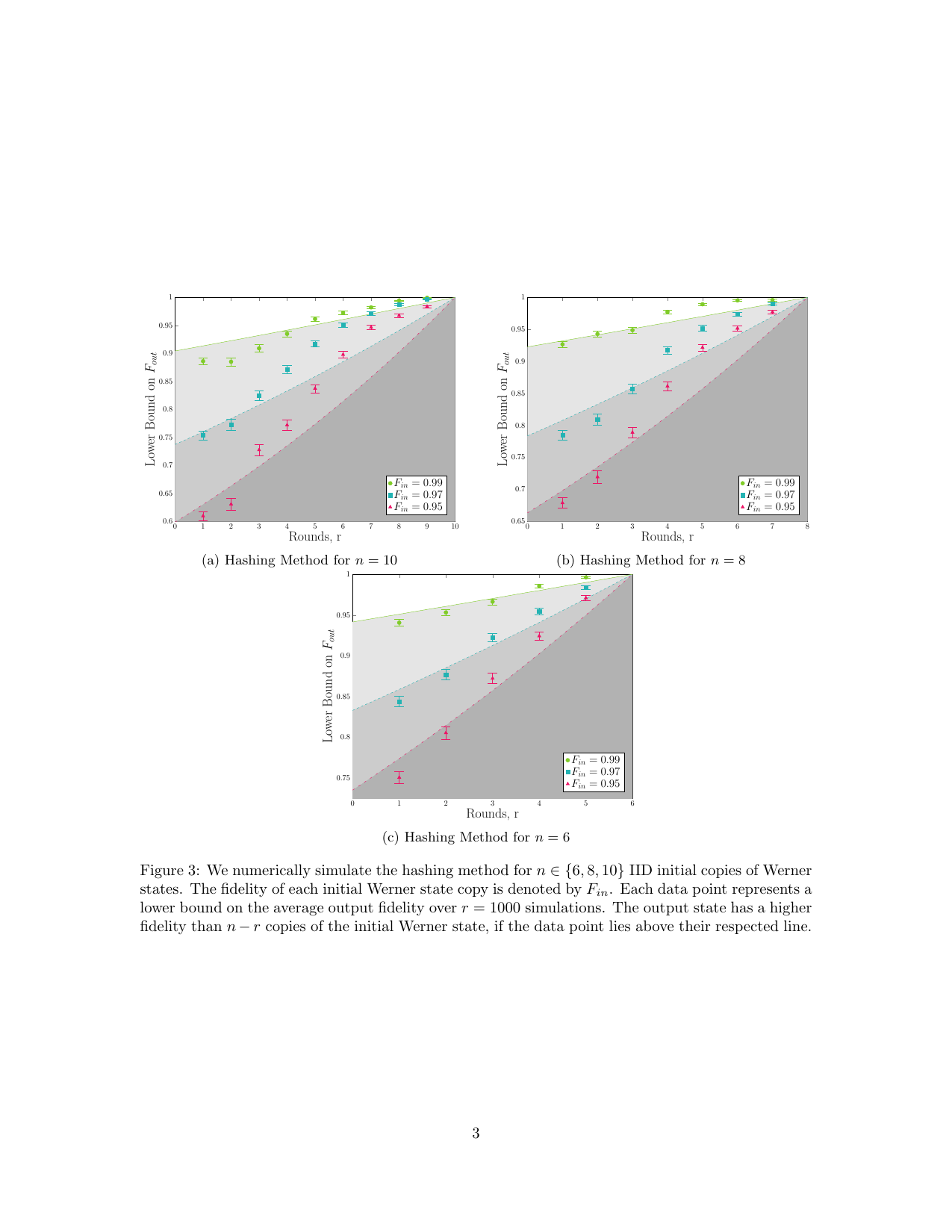}
}
\caption{We numerically simulate the hashing method for $n \in \{6,8,10\}$ IID initial copies of Werner states. The fidelity of each initial Werner state copy is denoted by $F_{in}$. Each data point represents a lower bound on the average output fidelity over $r=1000$ simulations. The output state has a higher fidelity than $n-r$ copies of the initial Werner state, if the data point lies above their respected line.}
\label{fig:numerics}
\end{figure*}

\noindent displayed in Fig.\ \ref{2b}. Figs.\ \ref{2c} and \ref{2d} compare our results to  known bounds for the hashing method from the previous work~\cite{Zwerger_2018}. They show that bounds via the smooth Hartley entropy are far more precise, even when $n \sim 100$. We remark that the rates are comparable to those achieved by protocols based on
error correcting codes~\cite{Roque:2023qcj}, which achieve high rates, but are highly state dependent.

Similar to~\cite{Zwerger_2018}, our results also converge to the asymptotic rate from~\cite{Bennett_1996,Bennett1996MixedstateEA}. This follows from both the classical~\cite{holenstein2006randomness} and fully quantum asymptotic equipartition property (AEP)~\cite{5319753} . Applying either AEP to the bound from Theorem \ref{ratetightbound} directly yields Lemma \ref{rateasymptound}.
\begin{lem}  \label{rateasymptound}
Let $ \rho_{AB}^{\otimes n} \in S_{=} \left(\mathcal{H}_{A^nB^n}  \right)$ be a normalized Bell-diagonal density matrix, whose eigenvalues are described by a probability distribution $P_{X^n}$. Then
\begin{equation}
    \lim_{\epsilon_1, \epsilon_2 \to 0}  \lim_{n \to \infty}  \frac{n-\ceil*{H_{0}^{\epsilon_{1}}(X^{n})_{P}-2\log_{2}\left(\epsilon_{2}\right) }}{n} = -H(A|B)_\rho \; .
\end{equation}
\end{lem}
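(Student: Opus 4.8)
The plan is to reduce the statement to the classical asymptotic equipartition property (AEP), applied to the product distribution $P_{X^n}$. First I would note that, since $\rho_{AB}$ is Bell-diagonal, so is $\rho_{AB}^{\otimes n}$, and its eigenvalue distribution factorizes as $P_{X^n} = P_X^{\times n}$, where $P_X$ is the four-point distribution of the Bell-basis eigenvalues of the single copy $\rho_{AB}$. Rewriting the quantity inside the limit as
\[
\frac{n - \lceil H_0^{\epsilon_1}(X^n)_P - 2\log_2(\epsilon_2)\rceil}{n} = 1 - \frac{H_0^{\epsilon_1}(X^n)_P}{n} + O\!\left(\tfrac1n\right),
\]
where the $O(1/n)$ collects the contribution of the ceiling (at most $1$) and of the $n$-independent constant $-2\log_2(\epsilon_2)$, shows that only the asymptotics of $\tfrac1n H_0^{\epsilon_1}(X^n)_P$ matter.

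Next I would use the characterization of $H_0^{\epsilon_1}$ from Lemma \ref{OptProbHartley}: $H_0^{\epsilon_1}(X^n)_P$ is the logarithm of the size of the smallest subset of $\mathcal X^n$ carrying probability at least $1 - \epsilon_1^2$. For an IID source this is exactly the quantity governed by the AEP. The typical set $\mathcal T^n_\delta$ has size at most $2^{n(H(X)_P + \delta)}$ and probability tending to $1$, which gives $\limsup_n \tfrac1n H_0^{\epsilon_1}(X^n)_P \le H(X)_P$; conversely, a standard counting argument shows that any set of size at most $2^{n(H(X)_P - \delta)}$ has vanishing probability, so for large $n$ no such set meets the constraint, giving $\liminf_n \tfrac1n H_0^{\epsilon_1}(X^n)_P \ge H(X)_P$. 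Hence $\lim_n \tfrac1n H_0^{\epsilon_1}(X^n)_P = H(X)_P$ for every fixed $\epsilon_1 \in (0,1)$, and the same conclusion follows from the fully quantum AEP of \cite{5319753}. In particular the inner limit is independent of $\epsilon_1$ and of $\epsilon_2$, so the outer limits are taken trivially.

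Finally I would identify the entropies. The entries of $P_X$ are the eigenvalues of the two-qubit state $\rho_{AB}$, so $H(X)_P = H(AB)_\rho$; and every Bell-diagonal state has maximally mixed marginal $\rho_B = \mathds{1}_2/2$, so $H(B)_\rho = 1$. Therefore $H(X)_P = H(AB)_\rho = H(A|B)_\rho + H(B)_\rho = H(A|B)_\rho + 1$, whence $1 - H(X)_P = -H(A|B)_\rho$, the claimed value. The only delicate points are matching the smoothing convention of Lemma \ref{OptProbHartley} (the $\epsilon^2$ inside the constraint) to the form of the AEP quoted from \cite{holenstein2006randomness}, and justifying the exchange of the $n\to\infty$ and $\epsilon_i\to 0$ limits; but since the inner limit does not depend on the $\epsilon_i$ this exchange is immediate, so there is no genuine obstacle and the lemma is essentially bookkeeping on top of Theorem \ref{ratetightbound} and the AEP.
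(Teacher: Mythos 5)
Your proof is correct, but it takes a noticeably different route from the paper's. The paper does not prove the AEP for the purified-distance-smoothed Hartley entropy directly: it cites the asymptotic result of \cite{RenWol05} (via \cite{holenstein2006randomness}) for a \emph{trace-distance}-smoothed variant $\tilde{H}_0^\epsilon$, and then transfers it to $H_0^{\epsilon}$ through the Fuchs--van de Graaf sandwich $\tilde{H}_{0}^{\epsilon}(X^{n})_{P} \leq H_{0}^{\epsilon}(X^{n})_{P} \leq \tilde{H}_{0}^{\epsilon^2/2}(X^{n})_{P}$, before finishing with the same entropy identifications ($H(X)_P = H(AB)_\rho$, $H(B)_\rho = 1$, chain rule). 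You instead exploit Lemma \ref{OptProbHartley} to identify $H_0^{\epsilon_1}(X^n)_P$ with the log-cardinality of the smallest set of mass $1-\epsilon_1^2$ and run the typical-set achievability and converse arguments from scratch on the product distribution $P_X^{\times n}$. Your version is more self-contained and sidesteps the need to reconcile smoothing conventions (the very point the paper flags as delicate); the paper's version is shorter on the page because it outsources the AEP to the literature, and its sandwich argument would also survive if one later changed which smoothing convention appears in the main theorem. The only step you leave compressed is the converse counting bound (any set of size $2^{n(H-\delta)}$ has vanishing mass), but that is standard and poses no real gap; note also that your argument, unlike the paper's as written, makes explicit that the inner limit is already independent of $\epsilon_1$ and $\epsilon_2$, which cleanly justifies the order of limits.
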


While it can thus be said that Theorem \ref{ratetightbound} is asymptotically tight, this does not mean that this lower bound is always optimal for fixed number of initial copies, $n$. Fig.\ \ref{fig:numerics} shows the results from numerical simulations of the one-way hashing method for $n \in \{ 6,8,10\}$, where each data point represents the average over $1000$ simulation runs.\footnote{To improve the runtime, the data is generated by simulating the protocol on a state close to the original state in purified distance. The numerical results are then used to bound the output fidelity of the original state via the triangle property of the purified distance.} These results indicate that, for small $n$, the actual achievable distillation rate is in fact much better than the analytical lower bound. Entanglement can be said to be distilled once the data points lie above their corresponding dotted lines. This happens after relatively few steps, thus strongly suggesting that the one-way hashing method is indeed experimentally feasible.

\subsection{Experiment Proposal}\label{Experiment}
Neutral atom processors encode quantum information in the long-lived electric and nuclear states of single atoms, typically an alkali or alkaline earth element~\cite{kangkuenreview2021}. Arrays of thousands of atomic qubits have been demonstrated~\cite{endres6100atoms2024}, making neutral atoms a strong option for large-scale quantum computing tasks. Furthermore, control of the qubit --- including state preparation, qubit manipulation, and measurement --- can be performed using lasers tuned to atomic transitions, resulting in fast, high-fidelity operations ($>$ 99.98\% fidelity 1-qubit rotations~\cite{lukinraman2022}). Despite the impressive developments of this platform, only tens of entangled pairs have been produced simultaneously in an experiment~\cite{lukinhighfidelity2023}, limited by factors such as a minimum spacing between entangled pairs, and trade-offs between laser beam size and irradiance. Thus, near-term devices will make the most use of distillation protocols which demonstrate high yield even for limited numbers of initial entangled pairs.

While it is easiest to entangle neutral atoms locally (i.e. at micrometer-scale distances within a single setup), the optical nature of atomic transitions presents the opportunity to build a quantum network consisting of separate many-atom nodes --- each possessing high-fidelity local operations --- interconnected with fiber optical links~\cite{kimble2008quantum, rempenetwork2012}. This is typically achieved by coupling atoms to an optical cavity, enhancing the light-matter interaction and enabling efficient collection of photons into fibers. Since photon emission is conditional on the state of an atom, an atom prepared in a superposition state can become entangled with its emitted photon. This photon is then sent through the network, and absorbed by an atom at another node, resulting in long-distance atom-atom entanglement. Scaling to many entangled atoms can be achieved with additional techniques, such as selectively coupling multiple atoms to one cavity~\cite{vuletic5atom2024} or coupling multiple atoms to multiple cavities~\cite{berniennanophotonic2024}. A comparable experiment, performed using NV center qubits in diamonds, used similar quantum networking techniques to create two long-distance entangled qubit pairs, and distill them into a single higher-fidelity entangled pair~\cite{hansondistillation2017}. However, this demonstration utilized the recurrence method, which is known to not scale well with the number of input entangled pairs. Producing larger entangled resources will inevitably require stronger distillation procedures, such as the one-way hashing method, and a large register of high-precision qubits. Yet present-day quantum network nodes are still limited in their scale and operation fidelity, so experiments which require sharing high-fidelity entangled resources will necessitate an EDP that has reasonable yield with small numbers of input entangled pairs.

For the local 2-qubit gates required by the hashing method, neutral atom processors can take advantage of entangling gates enabled by van der Waals interactions between high-energy `Rydberg' states~\cite{lukinjaksch2000,browaeysrydberg2010}. These interactions have been extensively used to perform controlled-phase gates between qubits, and recent results using optimal control techniques have achieved $\geq$ 99.5\% gate fidelity~\cite{lukinhighfidelity2023}. Of course, such a gate can also be used to generate many entangled pairs within a single quantum processor; applying a local distillation procedure would then provide a resource of high-quality entangled states for subsequent experimentation. 

The `natural' two-qubit gate for Rydberg-based systems is CZ, but the hashing method is written in terms of CNOT. One could simply decompose the CNOTs into CZs and Hadamards, but to reduce experimental requirements we can instead reformulate the algorithm in terms of CZ. This introduces minor changes to the single-qubit gates, but makes no changes to the theoretical performance (see appendix). It is worth noting that Rydberg gates do not provide all-to-all connectivity as required by the hashing method, so additional control techniques are required. Some solutions include dynamically rearranging the atoms to change the qubit connectivity~\cite{lukintransport2022}, using a second atomic species as an auxiliary qubit to mediate longer-range gates~\cite{bernienrydberg2024}, or simply compiling additional CZ gates to bridge the gap. An example of a CZ-based round of the one-way hashing method is given in Fig.\ \ref{fig:experiment}.

Another experimental hurdle is the targeted addressing of gates. Neutral atom qubit control is typically applied `globally', i.e. to every atom simultaneously. Of course, most circuits require specific gates to be applied to specific qubits. This introduces additional complexity to the optical control systems, but solutions have been demonstrated. Similarly to the qubit connectivity issue, one solution is to dynamically rearrange atoms; moving the atoms in or out of an `interaction region' allows gates to be applied to some atoms without affecting others \cite{lukintransport2022}. Alternatively, tightly-focused lasers could be steered onto specific atoms, implementing a gate only on those sites~\cite{saffmanalgorithms2022}. It may be the case that, for example, only site-selective Z and CZ operations are available, in which case site-selective X and Y operations can be decomposed into selective Zs and global Xs and Ys~\cite{bakerdecomposing2023}.
\begin{figure}[h]\centering
\includegraphics[scale=1]{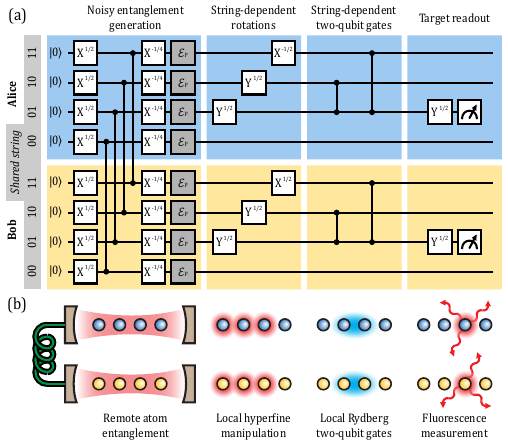}
\caption{Example of one round of the hashing method. (a) The bitstring shared by Alice and Bob determines what local gates to apply, and which entangled pair to measure. The hashing method can be modified for different initial entangled states (e.g. $\ket{\phi^+}$) and different sets of gates (e.g. using CZ instead of CNOT) without affecting any of the discussed bounds (see Section \ref{Subsection: ModifiedProtocol}). (b) Modern quantum computing systems, such as those based on neutral atoms, should be able to perform a modestly-sized distillation. Non-local entanglement can be generated across an optical quantum network, and laser control at each node enables local operations and measurement.}
\label{fig:experiment}
\end{figure} 

\section{Discussion} \label{Discussion}
As can be seen in Fig.\ \ref{fig:analytics}, Theorem \ref{ratetightbound} represents a genuine improvement on the previously known lower bounds from~\cite{Zwerger_2018}. This is due to the fact that we do not use concentration inequalities to bound the size of the typical set, which contains the errors that the one-way hashing method should be able to correct. Rather, we calculate the smooth Hartley entropy, which directly quantifies the minimal number of errors that need to be corrected for the one-way hashing method to produce a maximally entangled state~\cite{Renes_2012,tomamichel2013framework,gallager1968information}. Also, unlike concentration inequalities, $H_{0}^{\epsilon_{1}}(X^{n})$ does not require the initial state to be an IID tensor product, which is why our bounds hold for all Bell-diagonal states. 

The discrepancy between our analytical and numerical bounds is due to the proof technique that is applied. Similar to~\cite{Bennett_1996,Zwerger_2018}, we implicitly assume the worst case scenario, which is that the measurement outcomes have to provide a unique syndrome for each error on the initial state that we want to detect and subsequently correct. That is, if there exist at least two types of errors within the set of `relevant errors' that would produce the same observed measurement outcomes, then we implicitly lower bound the fidelity between $m$ Bell states and the output state of the protocol by the value $0$. This, however, does not take into account that these errors may happen with different probabilities. Nor does it account for the fact that the final state lies on a smaller Hilbert space than the original state, and therefore different initial errors may be mapped to the same error on the final state. When this occurs, the number of errors that actually need to be corrected decreases. The numerical simulations, however, do account for both of these effects, which is why they produce tighter bounds on the fidelity of the output state. 

One of the more experimentally demanding properties of the one-way hashing method is that the actions of each round depend on a random string $S$. As such, the experimental set-up must be capable of applying all of the potential gates. This issue can be mitigated in the following way. For $n$ IID copies of some state $\rho_{AB}$ of the form given by Eq.~(\ref{1Bellpair}), some choices of $S$ may be better at detecting and correcting errors than others. In Section~\ref{ECCConnection} we provide choices of $S$
that act as $[[4,2,2]]$ and $[[5,1,3]]$ stabilizer codes, for $n=4$ and $n=5$, respectively.

These results show that more efficient distillation protocols, such as the the one-way hashing method, may be implementable on modern quantum experimental platforms. Specifically, by showing that the one-way hashing method efficiently distills entanglement even for small numbers of initial entangled pairs, we suggest that near-term quantum networks have a feasible route towards distributing high-fidelity entanglement. As an example, we note that neutral atom arrays possess all of the ingredients required for a demonstration of large-scale distillation: large numbers of long-lived qubits, high fidelity one- and two-qubit operations, and the ability to generate remote entanglement by exchanging photons over a fiber optic network. By taking advantage of the one-way hashing method, these experiments could produce high-fidelity long-distance entanglement between atoms, which can subsequently be used for secure quantum communication, distributed quantum processing, enhanced metrology, and countless other applications.

\section*{Acknowledgments}
TH and and RAF acknowledge support from the Marshall and Arlene Bennett Family Research Program, the Minerva foundation with funding from the Federal German Ministry for Education and Research and the Israel Science Foundation (ISF), and the Directorate for Defense Research and Development (DDR\&D), grant No. 3426/21.3. RW is supported by a National Science Foundation Graduate Research Fellowship (Grant No. 2140001). HB and RW gratefully acknowledge funding from the NSF QLCI for Hybrid Quantum Architectures and Networks (NSF award 2016136), the NSF Quantum Interconnects Challenge for Transformational Advances in Quantum Systems (NSF award 2138068), the NSF Career program (NSF award 2238860).

\section*{Code Availability}
The MATLAB code used to generate the data can be found at:
\url{https://github.com/Thomas0501/One-Way-Hashing-Method}

\onecolumngrid
\section{Methods} \label{Methods}
\subsection{Definitions}
Over any set $\mathcal{X}$, the generalized fidelity and purified distance between two classical (sub-normalized) distributions $P_X,Q_X \in S_{\leq} \left(\mathcal{X}\right)$, are defined as
\begin{align}
    \operatorname{F}  (P_X,Q_X) &:= \left(\sum_{x \in \mathcal{X}} \sqrt{P_X(x) Q_X(x)} +\sqrt{\left(1-\operatorname{Tr}[P_X]\right)\left(1-\operatorname{Tr}[Q_X]\right)}\right)^2 \\
    \operatorname{P} (P_X,Q_X) &:=\sqrt{1-\operatorname{F}  (P_X,Q_X)} \; ,
\end{align}
respectively~\cite[Chapter 3]{Tomamichel2015QuantumIP}. These definitions can of course be generalized to quantum states. For sub-normalized density matrices $\rho_A, \ \sigma_A \in S_{\leq} \left(\mathcal{H}_A\right)$ on a Hilbert space $\mathcal{H}_A$, the generalized fidelity and purified distance are given by~\cite[Definitions 3.7 and 3.8]{Tomamichel2015QuantumIP}
\begin{align}
    \operatorname{F} (\rho_A,\sigma_A) &:=\left(\norm{\sqrt{\rho_A}\sqrt{\sigma_A}}_1 +\sqrt{\left(1-\operatorname{Tr}[\rho_A]\right)\left(1-\operatorname{Tr}[\sigma_A]\right)}\right)^2 \\
    \operatorname{P} (\rho_A,\sigma_A) &:=\sqrt{1-\operatorname{F}  (\rho_A,\sigma_A)}
    \; .
\end{align}
The entropies, which we will make use of for the proofs, are the Hartley and $\epsilon$-smooth Hartley entropies. They are defined for a sub-normalized probability distribution $Q_X$ over a set $\mathcal{X}$ as 
\begin{eqnarray}
    H_{0}\left(X\right)_P &:=& \log |\{x: P_X(x) >0\}| \\
    H_{0}^{\epsilon} \left(X\right)_P &:=& \inf_{Q_X \in  \mathscr{B}^{\epsilon} \left(P_X\right)} H_{0}\left(X\right)_Q \; ,
\end{eqnarray}
respectively, where 
\begin{eqnarray}
    \mathscr{B}^{\epsilon} \left(P_X\right) := \{Q_X \in S_{\leq} \left(\mathcal{X}\right): \operatorname{P} (P_X,Q_X) \leq \epsilon \} \; .
\end{eqnarray}
For a sub-normalized quantum state, $\rho_{A} \in S_{\leq} \left(\mathcal{H}_A\right)$,
\begin{eqnarray}
    \mathscr{B}^{\epsilon} \left(A\right)_\rho := \{\sigma \in S_{\leq} \left(\mathcal{H}_A\right): \operatorname{P} (\rho_{A},\sigma_{A}) \leq \epsilon \} \; .
\end{eqnarray}

\subsection{Technical Details}
For any Bell-diagonal state $\rho_{A^nB^n}$,  $r$ rounds of the hashing method take as input a quantum state $\rho_{A^nB^n}$ and a uniformly random bitstring, $s_{[r]}$, which dictates how the protocol acts during these rounds. When compared to the protocol description in Figure~\ref{HashingMethod}, $s_{[r]}$ should be viewed as denoting all of the bitstrings that were used for $r$ rounds of the protocol.
Moreover, it outputs $n-r$ qubit pairs, as well as measurement outputs, which Alice and Bob use to correct the occurred error (Since $s_{[r]}$ is public classical information, we include it in the protocol's output). It is thus a mapping of the form
\begin{eqnarray}
    \mathcal{L}_{hash}^{r}: \mathcal{H}_{A^{n}B^n}\otimes \mathcal{H}_{S_{[r]}} &\to&  \mathcal{H}_{A^{n-r}B^{n-r}} \otimes  \mathcal{H}_{Y} \otimes  \mathcal{H}_{S_{[r]}} \\
    \rho_{A^nB^n}  \otimes \rho_{S_{[r]}}&\mapsto& \sum_{y,s} \operatorname{Pr}(Y=y \land S_{[r]}=s_{[r]})\rho^{\prime}_{|Y=y,S_{[r]}=s_{[r]}} \otimes \ketbra{y}{y}\otimes \ketbra{s_{[r]}}{s_{[r]}} \; ,
\end{eqnarray}
where $s_{[r]}$ describes the random bitstring that was used for the $r$ rounds, $y$ represents the measurement outcomes of Alice and Bob, $\rho_{S_{[r]}}$ is a fully mixed state, and $\rho^{\prime}_{|Y=y,S_{[r]}=s_{[r]}}$ is the remaining quantum state that Alice and Bob share at the end of the protocol. Moreover, for all $s_{[r]}$ and $y$, Alice and Bob will ensure that the largest eigenvalue of $\rho^{\prime}_{|Y=y,S_{[r]}=s_{[r]}}$ corresponds to the Bell state ${ \ket{\Phi_{A^{n-r} B^{n-r}}} :=\ket{\phi^+}^{\otimes n-r}}$ in the post-processing step of the hashing protocol.

Ideally, Alice and Bob want the protocol output $\rho^{\prime}_{A^{n-r}B^{n-r}}$ (i.e. the post-protocol state after tracing out the classical registers) to be very close to the state $\ketbra{\Phi_{A^{n-r} B^{n-r}}}{\Phi_{A^{n-r} B^{n-r}}}$.
Proposition \ref{BenDistillationError} rephrases the results from~\cite{Bennett1996MixedstateEA} in terms of the (generalized) fidelity, using the Hartley entropy. 
\begin{prop} \label{BenDistillationError}
Let $\rho_{A^nB^n} \in S_{=} \left(\mathcal{H}_{A^nB^n}  \right)$ be a normalized Bell-diagonal density matrix. Then
\begin{eqnarray}
    \operatorname{F} (\rho^{\prime}_{A^{n-r}B^{n-r}},\ketbra{\Phi_{A^{n-r} B^{n-r}}}{\Phi_{A^{n-r} B^{n-r}}} ) \geq 1 - 2^{(H_0 (A^nB^n)_{\rho} -r)}
\end{eqnarray}
holds, where $\rho^{\prime}_{A^{n-r}B^{n-r}YS_{[r]}}:= \mathcal{L}_{hash}^{r}\left(\rho_{A^nB^n} \otimes \rho_{S_{[r]}} \right)$ and $\rho_{S_{[r]}}$ is the fully mixed state.
\end{prop}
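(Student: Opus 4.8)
The plan is to analyze the hashing protocol through the classical ``error'' random variable attached to the Bell-diagonal state. Write $\rho_{A^nB^n} = \sum_{x} P_{X^n}(x)\,\ketbra{B_x}{B_x}$, where $\ket{B_x}$ runs over the $4^n$ Bell states indexed by $x\in\{0,1\}^{2n}$, and set $N := |\{x: P_{X^n}(x)>0\}| = 2^{H_0(A^nB^n)_\rho}$. The local Clifford rotations and CNOTs of a single round followed by the measurement of pair $j^\star$ act, in this representation, by revealing a random $\mathbb{F}_2$-linear function of $x$ (the broadcast outcome) and by mapping the surviving error onto the remaining $n-1$ pairs; after $r$ rounds with public string $s_{[r]}$ each supported $x$ is sent to a pair $(y(x),e(x))$, the measurement record and the residual error on the $n-r$ output pairs, and the conditional state $\rho'_{|Y=y,S_{[r]}=s_{[r]}}$ is Bell-diagonal on those pairs with weight $w_{y,e}/\Pr(y\mid s_{[r]})$ on $\ket{B_e}$, where $w_{y,e}:=\sum_{x:\,y(x)=y,\,e(x)=e}P_{X^n}(x)$; the post-processing then relabels the heaviest residual so that it corresponds to $\ket{\Phi_{A^{n-r}B^{n-r}}}$.

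First I would reduce the statement to an eigenvalue estimate. Since $\rho'_{|y,s_{[r]}}$ is normalized and its largest eigenvalue is, by construction, aligned with $\ket{\Phi}$, the generalized fidelity collapses to $\operatorname{F}(\rho'_{|y,s_{[r]}},\ketbra{\Phi}{\Phi}) = \bra{\Phi}\rho'_{|y,s_{[r]}}\ket{\Phi} = \lambda_{\max}(\rho'_{|y,s_{[r]}})$. Tracing out the classical registers $Y,S_{[r]}$ and using that $\rho\mapsto\bra{\Phi}\rho\ket{\Phi}$ is linear gives
\begin{equation}
\operatorname{F}(\rho'_{A^{n-r}B^{n-r}},\ketbra{\Phi}{\Phi}) \;=\; \sum_{s_{[r]}}\Pr(s_{[r]})\sum_{y}\max_{e}w^{(s_{[r]})}_{y,e}\,,
\end{equation}
and, regrouping each per-outcome maximum by the error $x$ that realizes it (fixing a tie-breaking rule), the right-hand side equals $\sum_{x}P_{X^n}(x)\,\Pr_{S_{[r]}}[\,e(x)\text{ is the dominant residual in its outcome group}\,]$.

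Then I would lower bound, for every fixed $x$ in the support, the probability that $e(x)$ is dominant. The residual $e(x)$ can fail to be dominant only if some other supported $x'\neq x$ has $y(x')=y(x)$ while $e(x')\neq e(x)$; a union bound over the at most $N-1$ such $x'$ reduces everything to the claim $\Pr_{S_{[r]}}[\,y(x')=y(x)\ \text{and}\ e(x')\neq e(x)\,]\le 2^{-r}$ for any fixed distinct $x,x'$. This is where the core hashing estimate of~\cite{Bennett1996MixedstateEA} is invoked: run with fresh uniform randomness, a single round assigns the \emph{same} outcome to two \emph{distinct} current errors with probability at most $1/2$. On the event in question the residuals of $x$ and $x'$ can never coincide during the $r$ rounds (once equal they would remain equal, contradicting $e(x')\neq e(x)$), so before each round the two errors are still distinct, each round's ``agreement'' event has conditional probability $\le 1/2$ given the history, and the chain rule over the $r$ rounds with independent randomness yields $2^{-r}$. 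Hence $\Pr_{S_{[r]}}[e(x)\text{ dominant}]\ge 1-(N-1)2^{-r}$, and substituting into the identity above gives $\operatorname{F}(\rho'_{A^{n-r}B^{n-r}},\ketbra{\Phi}{\Phi}) \ge 1-(N-1)2^{-r} > 1 - 2^{\,H_0(A^nB^n)_\rho - r}$, as claimed.

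The main obstacle is the bookkeeping behind the one-round estimate: one must verify that the local-Clifford/CNOT/measurement sequence of a round does reveal a random $\mathbb{F}_2$-linear functional of the current error that separates any two distinct errors with probability at least $1/2$ -- this is essentially a faithful restatement of the argument in~\cite{Bennett1996MixedstateEA} and has to be transported with care -- and that residual errors which \emph{merge} only help, since then their probabilities pool onto a single eigenvalue, so they may be dropped from the union bound rather than spoiling it. The remaining ingredients (reduction to $\lambda_{\max}$ for Bell-diagonal states, linearity of fidelity against a pure target, and the chain rule over rounds) are routine.
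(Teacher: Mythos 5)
Your proposal is correct and follows the same overall strategy as the paper: both reduce the claim to showing that, averaged over the randomness and measurement record, the post-processed conditional state sits on the intended Bell state with probability at least $1-2^{H_0(A^nB^n)_\rho-r}$, and then convert that probability into the fidelity bound (you do this by linearity of the overlap with the pure target, the paper via operator monotonicity of the square root -- these are equivalent for commuting Bell-diagonal states). The only substantive difference is that the paper imports the key estimate (``the conditional state has rank one with probability at least $1-2^{H_0-r}$'') as a black box from Bennett et al., whereas you reprove it via the union bound over the $N-1$ competing errors and the per-round $\tfrac12$ parity-separation argument; your version is self-contained but establishes the same inequality.
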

\begin{proof}
    It is shown in~\cite{Bennett1996MixedstateEA} that, averaged over $Y$ and $S_{[r]}$, the output quantum state $\rho^{\prime}_{|Y=y,S_{[r]}=s_{[r]}}$ has rank $1$ with probability at least $1-2^{(H_0 (A^nB^n)_{\rho} -r)}$. For these $y$ and $s_{[r]}$, the distillation protocol succeeded, and $\rho^{\prime}_{|Y=y,S_{[r]}=s_{[r]}}$ is pure, and equal to $\ketbra{\Phi_{A^{n-r} B^{n-r}}}{\Phi_{A^{n-r} B^{n-r}}}$. We say that $(y,s) \in \mathcal{S}_{pass}$ if $\rho^{\prime}_{|Y=y,S_{[r]}=s_{[r]}}$ has rank one.

    By construction, each step of the one-way hashing method maps Bell-diagonal states to Bell-diagonal states. In particular this means that the states $\rho^{\prime}_{A^{n-r}B^{n-r}}$, $\rho^{\prime}_{|Y=y,S_{[r]}}$, and $\ketbra{\Phi_{A^{n-r} B^{n-r}}}{\Phi_{A^{n-r} B^{n-r}}}  $ commute. The relevant fidelity expression therefore simplifies to 
    \begin{align}
        \operatorname{F} (\rho^{\prime}_{A^{n-r}B^{n-r}},\ketbra{\Phi}{\Phi} ) =  \operatorname{Tr} \left[ \sqrt{\sum_{y,s} \operatorname{Pr}(Y=y \land S_{[r]}=s_{[r]}) \rho^{\prime}_{|Y=y,S_{[r]}=s_{[r]}}\ketbra{\Phi}{\Phi  }}\right]^2 \; ,
    \end{align}
where we abbreviate $\ket{\Phi_{A^{n-r} B^{n-r}}}$ with $\ket{\Phi}$. It then follows that
    \begin{align}
        &\operatorname{Tr} \left[ \sqrt{\sum_{y,s} \operatorname{Pr}(Y=y \land S_{[r]}=s_{[r]})\rho_{|Y=y,S_{[r]}=s_{[r]}}\ketbra{\Phi}{\Phi}  }\right]^2 \\
        &\geq  \operatorname{Tr} \left[ \sqrt{\sum_{(y,s) \in \mathcal{S}_{pass}} \operatorname{Pr}(Y=y \land S_{[r]}=s_{[r]})\rho_{|Y=y,S_{[r]}=s_{[r]}}\ketbra{\Phi}{\Phi}  }\right]^2 \\
        &= \left[\sqrt{\sum_{(y,s) \in \mathcal{S}_{pass}} \operatorname{Pr}(Y=y \land S_{[r]}=s_{[r]})}\right]^2 \\
        &\geq  1-2^{(H_0 (A^nB^n)_{\rho} -r)}  \; .
    \end{align}
The first inequality holds because the square root is an operator monotone (see e.g.~\cite[Proposition V.1.8]{BhatiaMatrixAnalysis}) and we are removing the positive semi-definite term
\begin{eqnarray}
    \sum_{(y,s) \notin \mathcal{S}_{pass}} \operatorname{Pr}(Y=y \land S_{[r]}=s_{[r]})\rho_{|Y=y,S_{[r]}=s_{[r]}}\ketbra{\Phi}{\Phi} \; ,
\end{eqnarray}
which is simply proportional to $\ketbra{\Phi}{\Phi}$.
The following equality results from the fact that $\rho_{|Y=y,S_{[r]}=s_{[r]}}=\ketbra{\Phi}{\Phi}$ if $(y,s) \in \mathcal{S}_{pass}$. The last inequality just uses that $(y,s) \in \mathcal{S}_{pass}$ with probability at least $1-2^{(H_0 (A^nB^n)_{\rho} -r)}$. 
\end{proof}
\begin{prop}[Non-tight Bounds]  \label{Method: PropNon-tight Bound}
Let $\rho_{A^nB^n} \in S_{=} \left(\mathcal{H}_{A^nB^n}  \right)$ be a normalized Bell-diagonal density matrix, whose eigenvalues are described by a probability distribution $P_{X^n}$. For all $ \epsilon > 0$, 
\begin{equation} \label{Method: Eqnontightbound}
    m^\epsilon \geq  n-\ceil*{H_{0}(X^{n})_{P}-2\log_{2}\left(\epsilon\right)}\; ,
\end{equation}
and the rate of the one-way hashing method can be lower bounded by
\begin{equation}
    R^\epsilon \geq \frac{n-\ceil*{ H_{0}(X^{n})_{P}-2\log_{2}\left(\epsilon\right) }}{n} \; .
\end{equation}
\end{prop}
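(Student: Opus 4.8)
The plan is to read off the statement directly from Proposition~\ref{BenDistillationError}. The phrase ``up to a negligible global error $\epsilon$'' is to be understood in terms of the purified distance used throughout the paper, so that running $r$ rounds of the hashing method and thereby producing $m = n-r$ Bell pairs counts as successful precisely when $\operatorname{P}(\rho'_{A^{n-r}B^{n-r}},\ketbra{\Phi}{\Phi}) \leq \epsilon$, equivalently $\operatorname{F}(\rho'_{A^{n-r}B^{n-r}},\ketbra{\Phi}{\Phi}) \geq 1-\epsilon^2$. Since Proposition~\ref{BenDistillationError} already gives $\operatorname{F}(\rho'_{A^{n-r}B^{n-r}},\ketbra{\Phi}{\Phi}) \geq 1-2^{H_0(A^nB^n)_\rho - r}$, it suffices to choose the number of rounds $r$ large enough that $2^{H_0(A^nB^n)_\rho - r} \leq \epsilon^2$.

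First I would rewrite this condition: applying $\log_2$ to both sides, $2^{H_0(A^nB^n)_\rho - r} \leq \epsilon^2$ is equivalent to $H_0(A^nB^n)_\rho - r \leq 2\log_2(\epsilon)$, i.e.\ $r \geq H_0(A^nB^n)_\rho - 2\log_2(\epsilon)$. Because the number of rounds is a non-negative integer, the smallest admissible choice is $r = \max\{0,\lceil H_0(A^nB^n)_\rho - 2\log_2(\epsilon)\rceil\}$ (no distillation round being needed when the ceiling is non-positive). For this $r$, Proposition~\ref{BenDistillationError} guarantees the output state is within purified distance $\epsilon$ of $\ketbra{\Phi}{\Phi}$, so the protocol produces $m^\epsilon \geq n-r = n-\lceil H_0(A^nB^n)_\rho - 2\log_2(\epsilon)\rceil$ Bell pairs (a bound that is vacuous, but still correct, once $\lceil H_0(A^nB^n)_\rho - 2\log_2(\epsilon)\rceil > n$, as already noted in the main text).

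Second, I would identify $H_0(A^nB^n)_\rho$ with $H_0(X^n)_P$: as $\rho_{A^nB^n}$ is Bell-diagonal with eigenvalue distribution $P_{X^n}$, its rank equals $|\{x : P_{X^n}(x)>0\}|$, hence $H_0(A^nB^n)_\rho = \log_2 \operatorname{rank}(\rho_{A^nB^n}) = H_0(X^n)_P$. Substituting yields $m^\epsilon \geq n-\lceil H_0(X^n)_P - 2\log_2(\epsilon)\rceil$, which is Eq.~(\ref{Method: Eqnontightbound}). Dividing through by $n$ gives $R^\epsilon = m^\epsilon/n \geq \bigl(n-\lceil H_0(X^n)_P - 2\log_2(\epsilon)\rceil\bigr)/n$, the claimed rate bound.

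I do not expect a genuine obstacle here: the argument is bookkeeping on top of Proposition~\ref{BenDistillationError}, and the only points needing care are (i) fixing the convention that the global error is measured by the purified distance, which is what produces the factor of two in $2\log_2(\epsilon)$ via the relation $\operatorname{P}=\sqrt{1-\operatorname{F}}$, and (ii) rounding the real-valued round-count threshold $H_0(X^n)_P - 2\log_2(\epsilon)$ up to an integer. The substantive content lives in Proposition~\ref{BenDistillationError}, which repackages the rank-halving analysis of~\cite{Bennett1996MixedstateEA}.
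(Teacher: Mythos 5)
Your proposal is correct and follows essentially the same route as the paper's proof: both invoke Proposition~\ref{BenDistillationError}, set the round count to $r=\lceil H_{0}(X^{n})_{P}-2\log_{2}(\epsilon)\rceil$, and translate the fidelity bound into a purified-distance bound of $\epsilon$, with the identification $H_0(A^nB^n)_\rho = H_0(X^n)_P$ left implicit in the paper but made explicit by you. Your explicit handling of the vacuous case ($r>n$) matches the paper's opening remark that the bound ``otherwise trivially holds.''
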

\begin{proof}
    Let us assume that 
\begin{equation}
        n \geq \lceil H_{0}(X^{n})_{P}-2\log_{2}\left(\epsilon\right) \rceil \; ,
\end{equation}
    as the bound otherwise trivially holds.  Applying $r:= \lceil H_{0}(X^{n})_{P}-2\log_{2}\left(\epsilon\right) \rceil$ rounds of the one-way hashing method will output the state $\rho^{\prime}_{A^{n-r}B^{n-r}YS_{[r]}}:= \mathcal{L}_{hash}^{r}\left(\rho_{A^nB^n} \otimes \rho_{S_{[r]}} \right)$, where $\rho_{S_{[r]}}$ is the fully mixed state. By Proposition \ref{BenDistillationError}, the purified distance between $\rho^{\prime}_{A^{n-r}B^{n-r}}$ and the maximally entangled state is bounded by 
    \begin{align}
        \operatorname{P} (\rho^{\prime}_{A^{n-r}B^{n-r}},\ketbra{\Phi}{\Phi} ) &:= \sqrt{1-\operatorname{F} (\rho^{\prime}_{A^{n-r}B^{n-r}},\ketbra{\Phi}{\Phi} ) }\\
        &\leq \sqrt{2^{(H_0 (A^nB^n)_{\rho} -r)}} \\
        &\leq \epsilon \; .
    \end{align}
The output is thus within $\epsilon$ distance of the desired state, and both parties are left with $n-r$ qubit pairs. The optimal number of Bell pairs, $m^\epsilon$, which can be produced via the one-way hashing method is thus bounded by
    \begin{equation}
    m^\epsilon \geq  n-\lceil H_{0}(X^{n})_{P}-2\log_{2}\left(\epsilon\right) \rceil \; ,
\end{equation}
and the corresponding bound on the achievable rate is attained by dividing this by $n$.
\end{proof}
For any initial Bell-diagonal state, $\rho_{A^{n}B^{n}} \in S_{=} \left(\mathcal{H}_{A^nB^n}  \right)$, its eigenvalues can be described by a probability distribution $P_{X^n}$. We are now interested in the Bell-diagonal state $\sigma_{A^{n}B^{n}}$ that is $\epsilon$-close to our initial state and has minimal rank, i.e. $\sigma_{A^{n}B^{n}} \in \mathscr{B}^{\epsilon} \left(A^n B^n\right)_\rho$ and $H_0 (A^nB^n)_{\sigma} = H_{0}^{\epsilon} (X^n)_{P}$. %Note that, by definition, such a state must always exist. 
Lemma \ref{NormalizedOptimizer} states that one can w.l.o.g.\ assume that $\sigma_{A^{n}B^{n}}$ is normalized.

\begin{lem}\label{NormalizedOptimizer}
Let $\rho_{A^{n}B^{n}} \in S_{=} \left(\mathcal{H}_{A^nB^n}  \right)$ be a normalized Bell-diagonal quantum state, and let $\sigma_{A^{n}B^{n}} \in S_{\leq} \left(\mathcal{H}_{A^nB^n}  \right)$ be a sub-normalized Bell-diagonal state such that $\sigma_{A^{n}B^{n}} \in \mathscr{B}^{\epsilon} \left(A^n B^n\right)_\rho$. Then there exists a normalized Bell-diagonal quantum state $\tau_{A^{n}B^{n}} \in S_{=} \left(\mathcal{H}_{A^nB^n}  \right)$ such that $\tau_{A^{n}B^{n}} \in \mathscr{B}^{\epsilon} \left(A^n B^n\right)_\rho$ and $H_0 (A^nB^n)_{\tau} = H_0 (A^nB^n)_{\sigma}$.
\end{lem}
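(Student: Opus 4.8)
The plan is to reduce the claim to an elementary statement about sub-normalized probability distributions and then write down $\tau_{A^{n}B^{n}}$ by hand. Since $\rho_{A^nB^n}$ and $\sigma_{A^nB^n}$ are both diagonal in the Bell basis they are simultaneously diagonalizable, so the generalized fidelity, the purified distance, and the Hartley entropy all depend only on their eigenvalue distributions. Writing $P_{X^n}$ for the eigenvalues of $\rho_{A^nB^n}$ and $Q_{X^n}$ for those of $\sigma_{A^nB^n}$ over the Bell-basis index set $\mathcal{X}^n$, the identity $\sqrt{\rho_{A^nB^n}}\sqrt{\sigma_{A^nB^n}}=\sum_x\sqrt{P_{X^n}(x)Q_{X^n}(x)}\,\ketbra{x}{x}$ gives $\operatorname{F}(\rho_{A^nB^n},\sigma_{A^nB^n})=\operatorname{F}(P_{X^n},Q_{X^n})$, hence $\operatorname{P}(\rho_{A^nB^n},\sigma_{A^nB^n})=\operatorname{P}(P_{X^n},Q_{X^n})\le\epsilon$, while $H_{0}(A^nB^n)_{\sigma}=\log_2|\{x:Q_{X^n}(x)>0\}|$. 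Thus it suffices to exhibit a \emph{normalized} distribution $R_{X^n}$ on $\mathcal{X}^n$ with $\operatorname{P}(P_{X^n},R_{X^n})\le\epsilon$ and with the same support size as $Q_{X^n}$; the Bell-diagonal state whose eigenvalues are $R_{X^n}$ is then the required $\tau_{A^nB^n}$.

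The decisive observation is that, because $P_{X^n}$ is normalized, the $\sqrt{(1-\operatorname{Tr}[P_{X^n}])(1-\operatorname{Tr}[Q_{X^n}])}$ term in the generalized fidelity drops out, so $\operatorname{F}(P_{X^n},Q_{X^n})=\bigl(\sum_x\sqrt{P_{X^n}(x)Q_{X^n}(x)}\bigr)^2$, and the same holds with $Q_{X^n}$ replaced by any normalized $R_{X^n}$. If $\epsilon\ge 1$ the distance constraint is automatically satisfied and one may take $R_{X^n}$ uniform on the support of $Q_{X^n}$ (which is non-empty since $\operatorname{P}(\rho_{A^nB^n},0)=1$, so $\sigma_{A^nB^n}=0$ is inconsistent with $\operatorname{P}\le\epsilon<1$, and for $\epsilon\ge1$ this edge case is irrelevant anyway). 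Otherwise $\epsilon<1$, so $\operatorname{F}(P_{X^n},Q_{X^n})=1-\operatorname{P}(P_{X^n},Q_{X^n})^2>0$, whence there is a point $x_{0}$ with $P_{X^n}(x_{0})>0$ and $Q_{X^n}(x_{0})>0$.

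I would then transfer the missing mass of $Q_{X^n}$ onto this single point: set $\delta:=1-\sum_x Q_{X^n}(x)\ge 0$, and define $R_{X^n}(x):=Q_{X^n}(x)$ for $x\neq x_{0}$ and $R_{X^n}(x_{0}):=Q_{X^n}(x_{0})+\delta$. Then $R_{X^n}$ is a normalized distribution whose support equals that of $Q_{X^n}$ (mass was added only where $Q_{X^n}$ was already positive), so the Hartley entropies coincide. Moreover $R_{X^n}(x)\ge Q_{X^n}(x)$ for every $x$, hence $\sum_x\sqrt{P_{X^n}(x)R_{X^n}(x)}\ge\sum_x\sqrt{P_{X^n}(x)Q_{X^n}(x)}$ and so $\operatorname{F}(P_{X^n},R_{X^n})\ge\operatorname{F}(P_{X^n},Q_{X^n})$; since $\operatorname{P}=\sqrt{1-\operatorname{F}}$ is decreasing in the fidelity, $\operatorname{P}(P_{X^n},R_{X^n})\le\operatorname{P}(P_{X^n},Q_{X^n})\le\epsilon$, as needed.

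There is no genuine obstacle here; the only care needed is the sub-normalization bookkeeping in the definitions of the generalized fidelity and purified distance, which is exactly what makes the ``pad the deficit into the existing support'' move work — the correction term vanishes precisely because $\rho_{A^nB^n}$ is normalized. The $\epsilon\ge 1$ regime is the one place where a different (but equally trivial) choice of $R_{X^n}$ is convenient.
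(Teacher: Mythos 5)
Your proof is correct, and its overall strategy matches the paper's: exhibit a normalized Bell-diagonal state with the same support as $\sigma_{A^{n}B^{n}}$ whose generalized fidelity with $\rho_{A^{n}B^{n}}$ is no smaller, then conclude via $\operatorname{P}=\sqrt{1-\operatorname{F}}$. The one genuine difference is the normalization step. The paper simply rescales, taking $\tau_{A^{n}B^{n}} = \sigma_{A^{n}B^{n}}/\operatorname{Tr}[\sigma_{A^{n}B^{n}}]$, which manifestly preserves the rank and gives $\operatorname{F}(\rho,\tau) = \operatorname{F}(\rho,\sigma)/\operatorname{Tr}[\sigma] \geq \operatorname{F}(\rho,\sigma)$ directly (using, exactly as you do, that the cross term in the generalized fidelity vanishes because $\rho_{A^{n}B^{n}}$ is normalized). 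You instead pad the trace deficit $\delta$ onto a single point of the common support, which increases the eigenvalue distribution pointwise and hence the fidelity. Both are valid; the multiplicative rescaling is marginally cleaner since it requires no reduction to classical distributions, no case split on $\epsilon\geq 1$, and no existence argument for a point where both $P_{X^n}$ and $Q_{X^n}$ are positive — its only implicit requirement is $\sigma_{A^{n}B^{n}}\neq 0$, the same degenerate case your argument (and the lemma itself, since a normalized state cannot have empty support) must also exclude. Your version does have the mild virtue of flagging that edge case explicitly, but for this lemma the extra bookkeeping buys nothing.
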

\begin{proof}
For any $\sigma_{A^{n}B^{n}} \in S_{\leq} \left(\mathcal{H}_{A^nB^n}  \right)$, let us define the normalized state $\tau_{A^{n}B^{n}} = \frac{\sigma_{A^{n}B^{n}}}{\operatorname{Tr}\left[\sigma_{A^{n}B^{n}}\right]}$. From this definition, it follows that both states have the same rank, i.e. $H_0 (A^nB^n)_{\tau} = H_0 (A^nB^n)_{\sigma}$. Moreover, 
\begin{eqnarray}
    \operatorname{F} (\rho,\tau) &=&\norm{\rho^{1/2}\tau^{ 1/2}}_{1}^{2} \\
    &=& \frac{1}{\operatorname{Tr}\left[\sigma\right]}\norm{\rho^{1/2}\sigma^{ 1/2}}_{1}^{2} \\
    &=& \frac{1}{\operatorname{Tr}\left[\sigma\right]}\operatorname{F} (\rho,\sigma) \\
    &\geq & \operatorname{F} (\rho,\sigma) \; .
\end{eqnarray}
From this, it follows that 
\begin{equation}
    \operatorname{P} (\rho,\tau) \leq \operatorname{P} (\rho,\sigma) \leq \epsilon \; .
\end{equation}
\end{proof}
Theorem \ref{Method: TheoremTightRate} gives a lower bound on the distillation rate, and it is derived by combining Proposition \ref{BenDistillationError} with Lemma \ref{NormalizedOptimizer}.
\begin{thm}  [Tight Bounds]\label{Method: TheoremTightRate}
Let $\rho_{A^nB^n} \in S_{=} \left(\mathcal{H}_{A^nB^n}  \right)$ be a normalized Bell-diagonal density matrix, whose eigenvalues are described by a probability distribution $P_{X^n}$. For all $\epsilon_1,\epsilon_2 >0$ that satisfy $\epsilon_{1}+\epsilon_{2} \leq \epsilon$,
\begin{equation} \label{Method: Eqtightbound}
    m^\epsilon \geq  n-\lceil H_{0}^{\epsilon_{1}}(X^{n})_{P}-2\log_{2}\left(\epsilon_{2}\right) \rceil \; ,
\end{equation}
and the rate of the one-way hashing method can be lower bounded by
\begin{equation} \label{Method: Eqtightrate}
    R^\epsilon \geq n-\frac{\ceil*{H_{0}^{\epsilon_{1}}(X^{n})_{P}-2\log_{2}\left(\epsilon_{2}\right)} }{n} \; .
\end{equation}
\end{thm}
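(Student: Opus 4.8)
The plan is to run the hashing protocol not against $\rho_{A^nB^n}$ itself, as in Proposition~\ref{Method: PropNon-tight Bound}, but against an optimally \emph{smoothed} Bell-diagonal state, and to charge the smoothing error to the budget $\epsilon_1$ using the triangle inequality for the purified distance. First I would dispose of the trivial case: if $n<\ceil*{H_{0}^{\epsilon_{1}}(X^{n})_{P}-2\log_{2}\left(\epsilon_{2}\right)}$ the asserted lower bound on $m^\epsilon$ is non-positive (and the rate bound then follows by dividing by $n$), so I may assume $r:=\ceil*{H_{0}^{\epsilon_{1}}(X^{n})_{P}-2\log_{2}\left(\epsilon_{2}\right)}\le n$.

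Next, since $H_0$ takes only finitely many values (equivalently, by the reformulation in Lemma~\ref{OptProbHartley}) the infimum defining $H_{0}^{\epsilon_{1}}(X^{n})_{P}$ is attained, and Lemma~\ref{NormalizedOptimizer} lets me fix a \emph{normalized} Bell-diagonal state $\sigma_{A^{n}B^{n}}\in\mathscr{B}^{\epsilon_{1}}\left(A^{n}B^{n}\right)_\rho$ with $H_0(A^nB^n)_\sigma = H_{0}^{\epsilon_{1}}(X^{n})_{P}$. The strategy I would analyze is this: Alice and Bob run $r$ rounds of the one-way hashing method on their true state $\rho_{A^nB^n}$, but in the post-processing step they apply the Pauli correction that the protocol would prescribe \emph{if the input state were $\sigma_{A^{n}B^{n}}$} (they are free to do so since $\sigma$ is publicly known; this correction may be suboptimal for $\rho$, but we only want a lower bound on $m^\epsilon$). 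Let $\mathcal{E}$ be the channel that performs those $r$ hashing rounds, applies that fixed $\sigma$-based Pauli correction, and then discards the classical registers $Y,S_{[r]}$; it is CPTP and --- crucially --- it is the same channel whether its input is $\rho_{A^nB^n}$ or $\sigma_{A^{n}B^{n}}$. Set $\rho':=\mathcal{E}\left(\rho_{A^nB^n}\otimes\rho_{S_{[r]}}\right)$ and $\sigma':=\mathcal{E}\left(\sigma_{A^{n}B^{n}}\otimes\rho_{S_{[r]}}\right)$, with $\rho_{S_{[r]}}$ the fully mixed state.

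Then I would combine the two error contributions. Monotonicity of the purified distance under $\mathcal{E}$, together with its invariance under adjoining the common register $\rho_{S_{[r]}}$, gives $\operatorname{P}(\rho',\sigma')\le\operatorname{P}(\rho_{A^nB^n},\sigma_{A^{n}B^{n}})\le\epsilon_1$. Because $\sigma_{A^{n}B^{n}}$ is a normalized Bell-diagonal state, Proposition~\ref{BenDistillationError} applies to it and, together with $r\ge H_{0}^{\epsilon_{1}}(X^{n})_{P}-2\log_{2}\left(\epsilon_{2}\right)$, yields
\begin{equation}
\operatorname{F}\left(\sigma',\ketbra{\Phi}{\Phi}\right) \ge 1-2^{H_0(A^nB^n)_\sigma-r} = 1-2^{H_{0}^{\epsilon_{1}}(X^{n})_{P}-r} \ge 1-\epsilon_2^{2} \; ,
\end{equation}
where $\ket{\Phi}:=\ket{\phi^{+}}^{\otimes(n-r)}$; hence $\operatorname{P}(\sigma',\ketbra{\Phi}{\Phi})\le\epsilon_2$. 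The triangle inequality for the purified distance then gives $\operatorname{P}(\rho',\ketbra{\Phi}{\Phi})\le\epsilon_1+\epsilon_2\le\epsilon$, so this strategy distills $n-r$ pairs up to global error $\epsilon$. Therefore $m^\epsilon\ge n-r = n-\ceil*{H_{0}^{\epsilon_{1}}(X^{n})_{P}-2\log_{2}\left(\epsilon_{2}\right)}$, and dividing by $n$ gives the stated rate bound.

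I expect the main subtlety to be the bookkeeping that makes the contraction step legitimate: one must exhibit a single CPTP channel applied to both $\rho$ and $\sigma$ --- which is exactly why the post-processing must be the \emph{fixed} $\sigma$-based correction for both inputs --- and one should check that ``read the classical measurement register and conditionally apply a Pauli'' is itself a CPTP map, so that monotonicity of the purified distance applies. The remaining ingredients (attainment of the smoothing optimum, Lemma~\ref{NormalizedOptimizer}, and the monotonicity and triangle inequality of the purified distance) are standard.
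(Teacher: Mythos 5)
Your proposal is correct and follows essentially the same route as the paper's proof: pick a normalized minimal-rank smoothing state $\sigma_{A^nB^n}$ via Lemma~\ref{NormalizedOptimizer}, apply Proposition~\ref{BenDistillationError} to $\sigma$ to get $\operatorname{P}(\sigma',\ketbra{\Phi}{\Phi})\le\epsilon_2$, and combine with the triangle inequality and monotonicity of the purified distance to conclude $\operatorname{P}(\rho',\ketbra{\Phi}{\Phi})\le\epsilon_1+\epsilon_2$. Your explicit insistence that the $r$ hashing rounds plus the \emph{fixed} $\sigma$-based Pauli correction form a single CPTP map applied to both inputs is a point the paper leaves implicit, but it is a clarification of the same argument rather than a different one.
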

\begin{proof}
    For any state $\rho_{A^nB^n} \in S_{=} \left(\mathcal{H}_{A^nB^n}  \right)$, let $\sigma_{A^{n}B^{n}}$ be a normalized state such that  $\sigma_{A^{n}B^{n}} \in \mathscr{B}^{\epsilon_1} \left(A^n B^n\right)_\rho$ and $H_0 (A^nB^n)_{\sigma} = H_{0}^{\epsilon_1} (X^n)_{P}$. Note that this state must exist due to the definitions for smooth entropies and Lemma \ref{NormalizedOptimizer}. After $r:=\lceil H_{0}^{\epsilon_{1}}(X^{n})_{P}-2\log_{2}\left(\epsilon_{2}\right) \rceil $ rounds, Proposition \ref{BenDistillationError} implies that, for $\sigma_{A^{n}B^{n}}$,
    \begin{align}
        \operatorname{F} (\sigma^{\prime}_{A^{n-r}B^{n-r}},\ketbra{\Phi_{A^{n-r} B^{n-r}}}{\Phi_{A^{n-r} B^{n-r}}} ) &\geq 1 - 2^{(H_0 (A^nB^n)_{\sigma} -r)} \\
        &= 1 - 2^{(H_{0}^{\epsilon_1} (X^n)_{P} -r)} \\
        &\geq 1 - 2^{2\log_{2}\left(\epsilon_{2}\right)} \\
        &= 1 - \epsilon_{2}^2
    \end{align}
    In terms of the purified distance, one then has that
    \begin{eqnarray}
              \operatorname{P} (\sigma^{\prime}_{A^{n-r}B^{n-r}},\ketbra{\Phi_{A^{n-r} B^{n-r}}}{\Phi_{A^{n-r} B^{n-r}}}) \leq \epsilon_{2}
    \end{eqnarray}
 Using the property that the purified distance satisfies the triangle inequality and is monotone under trace non-increasing completely positive maps, see e.g.~\cite[Proposition 3.1]{Tomamichel2015QuantumIP}, one has that
 \begin{align} \label{Eq: TriangleIneq}
     \operatorname{P} (\rho^{\prime}_{A^{n-r}B^{n-r}},\ketbra{\Phi}{\Phi} ) &\leq  \operatorname{P} (\rho_{A^nB^n},\sigma_{A^nB^n} ) + \operatorname{P} (\sigma^{\prime}_{A^{n-r}B^{n-r}},\ketbra{\Phi}{\Phi} ) \\
     &\leq \epsilon_1 + \epsilon_2 \\
     &\leq \epsilon \; ,
 \end{align}
 where $\ket{\Phi}$ again represents $\ket{\Phi_{A^{n-r} B^{n-r}}}$.
    Moreover, recall that Alice and Bob are left with $n-r$ Bell pairs and it thus holds that
    \begin{equation}
    m^\epsilon \geq  n-\ceil*{H_{0}^{\epsilon_{1}}(X^{n})_{P}-2\log_{2}\left(\epsilon_{2}\right)} \; ,
\end{equation}
as well as
\begin{equation}
    R^\epsilon \geq \frac{n-\ceil*{H_{0}^{\epsilon_{1}}(X^{n})_{P}-2\log_{2}\left(\epsilon_{2}\right) }}{n} \; .
\end{equation}
\end{proof}

%To calculate explicit bounds, one needs a method to compute the smooth Hartley entropy. Proposition \ref{CalcRank} gives an optimization problem that can be used to calculate the smooth Hartley entropy for Bell-diagonal states. Alternatively, on could relate the Hartley entropy to the max entropy and use the SDP in~\cite{tomamichel2013framework} to compute the the smooth max entropy. 
\begin{lem} \label{CalcRank}
Let $\mathcal{X}$ by a finite set and let $P_{X}(x)$ be a normalized probability distribution on $\mathcal{X}$. Then $H_{0}^{\epsilon} \left(X\right)_P$ is given by
\begin{equation} \label{optprobHart}
 \begin{aligned}
\min_{m} \quad & \log_{2} \left(m\right) \\
\textrm{s.t.} \quad & \sum_{x \in \mathcal{I}_{m}} P_{X}\left(x\right) \geq 1 - \epsilon^2 \; ,
\end{aligned}
\end{equation}
where $P_{X}\left(x_{1}\right), \dots,  P_{X}\left(x_{m}\right)$ are the $m$ largest weights of the distribution and $\mathcal{I}_{m} =\{ x_{1}, \dots,  x_{m}\}$
\end{lem}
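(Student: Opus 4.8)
The plan is to unwind the definition $H_{0}^{\epsilon}(X)_{P}=\inf_{Q_{X}\in\mathscr{B}^{\epsilon}(P_{X})}H_{0}(X)_{Q}$ and turn the purified-distance ball constraint into an elementary statement about the masses of subsets of $\mathcal{X}$. Since $P_{X}$ is normalized, $\operatorname{Tr}[P_{X}]=1$ and the correction term in the generalized fidelity drops out, so $\operatorname{F}(P_{X},Q_{X})=\bigl(\sum_{x}\sqrt{P_{X}(x)Q_{X}(x)}\bigr)^{2}$ for every sub-normalized $Q_{X}$. Assuming $0\le\epsilon<1$ (the only regime used in the paper), the condition $\operatorname{P}(P_{X},Q_{X})\le\epsilon$ is therefore equivalent to $\sum_{x}\sqrt{P_{X}(x)Q_{X}(x)}\ge\sqrt{1-\epsilon^{2}}$, and the task reduces to minimizing $|\operatorname{supp}(Q_{X})|$ over sub-normalized $Q_{X}$ obeying this inequality. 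Write $m^{\star}$ for the smallest integer such that the $m^{\star}$ largest weights of $P_{X}$ sum to at least $1-\epsilon^{2}$ (well-defined since the total mass equals $1$); I will show this minimum equals $m^{\star}$, giving the claimed value $\log_{2}m^{\star}$.

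For the upper bound $H_{0}^{\epsilon}(X)_{P}\le\log_{2}m^{\star}$, I would exhibit an explicit feasible $Q_{X}$ supported on $\mathcal{I}_{m^{\star}}$. Let $p:=\sum_{x\in\mathcal{I}_{m^{\star}}}P_{X}(x)\ge 1-\epsilon^{2}$ and set $Q_{X}(x):=P_{X}(x)/p$ for $x\in\mathcal{I}_{m^{\star}}$ and $0$ otherwise; this is a normalized distribution on $\mathcal{I}_{m^{\star}}$. Then $\sum_{x}\sqrt{P_{X}(x)Q_{X}(x)}=p^{-1/2}\sum_{x\in\mathcal{I}_{m^{\star}}}P_{X}(x)=\sqrt{p}\ge\sqrt{1-\epsilon^{2}}$, so $Q_{X}\in\mathscr{B}^{\epsilon}(P_{X})$ with $H_{0}(X)_{Q}=\log_{2}m^{\star}$. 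The one subtlety is to use the \emph{normalized} restriction of $P_{X}$ rather than the plain restriction: the latter would give fidelity $p$ instead of $\sqrt{p}$ and hence would not obviously satisfy the constraint.

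For the lower bound, I would take an arbitrary $Q_{X}\in\mathscr{B}^{\epsilon}(P_{X})$, set $T:=\operatorname{supp}(Q_{X})$ and $m:=|T|$, and apply Cauchy--Schwarz together with $\operatorname{Tr}[Q_{X}]\le 1$:
\[
\sqrt{1-\epsilon^{2}}\;\le\;\sum_{x\in T}\sqrt{P_{X}(x)Q_{X}(x)}\;\le\;\Bigl(\sum_{x\in T}P_{X}(x)\Bigr)^{1/2}\Bigl(\sum_{x\in T}Q_{X}(x)\Bigr)^{1/2}\;\le\;\Bigl(\sum_{x\in T}P_{X}(x)\Bigr)^{1/2}.
\]
Since the $m$ largest weights of $P_{X}$ dominate the mass of any $m$-element set, $\sum_{x\in T}P_{X}(x)\le\sum_{x\in\mathcal{I}_{m}}P_{X}(x)$, so $\sum_{x\in\mathcal{I}_{m}}P_{X}(x)\ge 1-\epsilon^{2}$; by minimality of $m^{\star}$ this forces $m\ge m^{\star}$, hence $H_{0}(X)_{Q}=\log_{2}m\ge\log_{2}m^{\star}$. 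Taking the infimum over $Q_{X}$ gives $H_{0}^{\epsilon}(X)_{P}\ge\log_{2}m^{\star}$, and combined with the upper bound the infimum is attained and equals $\log_{2}m^{\star}$, which is exactly the optimum of the displayed program. There is no deep obstacle here; the only points needing care are the reduction of the constraint (observing that normalization of $P_{X}$ kills the correction term in the generalized fidelity) and the choice of test state in the upper bound. I would also handle the degenerate case $\epsilon\ge 1$ separately, or simply restrict to $0\le\epsilon<1$ as the paper implicitly does.
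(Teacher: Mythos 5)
Your proof is correct, and while the upper-bound half coincides with the paper's (both exhibit the normalized restriction of $P_X$ to the top $m^\star$ elements and observe that the fidelity equals $\sum_{x\in\mathcal{I}_{m^\star}}P_X(x)\ge 1-\epsilon^2$), your lower-bound half takes a genuinely different and more elementary route. The paper argues by contradiction: it fixes the rank at $m^\star-1$, asserts w.l.o.g.\ that the optimal $Q_X$ is normalized and supported on the $m^\star-1$ largest weights, and then solves for the fidelity-maximizing $Q_X$ explicitly via Lagrange multipliers, finding a maximal fidelity of $\sum_{x=1}^{m^\star-1}P_X(x)<1-\epsilon^2$. You instead take an arbitrary feasible $Q_X$ with support $T$, $|T|=m$, and apply Cauchy--Schwarz together with $\operatorname{Tr}[Q_X]\le 1$ to get $\sqrt{1-\epsilon^2}\le\bigl(\sum_{x\in T}P_X(x)\bigr)^{1/2}$, then pass to the top-$m$ set and invoke minimality of $m^\star$. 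Your version buys two things: it avoids the unproved ``w.l.o.g.'' reduction to a normalized optimizer supported on the largest weights (which the paper leaves as an assertion), and it handles all ranks $m$ at once rather than only $m^\star-1$; the paper's version, in exchange, identifies the exact optimizer and the exact value of the maximal fidelity at each rank, which is mildly more informative but not needed for the lemma. Your remarks about the vanishing correction term in the generalized fidelity (since $P_X$ is normalized) and about restricting to $0\le\epsilon<1$ are both appropriate and match the paper's implicit assumptions.
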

\begin{proof}
    Let $m^\star$ be the optimal $m$ that satisfies Eq.~(\ref{optprobHart}). We will first show that $H_{0}^{\epsilon} \left(X\right)_P \leq \log_{2} \left(m^\star\right)$.
Let us consider the normalized distribution
\begin{align}
     Q_{X}(x) = \begin{cases} 
          \frac{P_{X}\left(x\right)}{\sum_{x \in \mathcal{I}_{m^\star} }P_{X}\left(x\right)} & \text{if} \ x \in \mathcal{I}_{m^\star} \\
          0 & \text{if} \ x \notin \mathcal{I}_{m^\star} \; .
       \end{cases} 
\end{align}
For this distribution, it holds that $H_{0} \left(X\right)_Q =\log_{2} \left(m^\star\right)$ and 

\begin{align}
    \operatorname{P} (P_X,Q_X )  &= \sqrt{1-\operatorname{F}  (P_X,Q_X)}  \\
    &= \sqrt{1-\left(\sum_{x \in \mathcal{I}_{m}} \sqrt{P_X(x) Q_X(x)} \right)^2} \\
    &= \sqrt{1-\sum_{x \in \mathcal{I}_{m^\star}} P_{X}\left(x\right)} \\
    &\leq  \epsilon  \; .
\end{align}
Since $Q_X$ is at least $\epsilon$-close to $P_X$ and has $H_{0} \left(X\right)_Q = \log_{2} \left(m^\star\right)$, it follows that 
$H_{0}^{\epsilon} \left(X\right)_P \leq \log_{2} \left(m^\star\right)$.
Let us now show the reverse inequality. Assume $H_{0}^{\epsilon} \left(X\right)_P = \log_{2} \left(m^\prime\right)$ where $m^\prime < m^\star$. The maximal achievable fidelity for any sub-normalized distribution with rank less than $ m^\star$ is given by
\begin{eqnarray}
\max_{ Q_{X}(x) \in S_{\leq} \left(X\right), \ \text{rank}[Q]<m^\star} \left(\sum \sqrt{P_{X}\left(x\right)Q_{X}\left(x\right)} \right)^2 \; .
\end{eqnarray}
W.l.o.g. assume that $P_X(1) \geq P_X(2) \geq \dots \geq P_X(|\mathcal{X}|)$. Then there exists a \textit{normalized} distribution $Q_X$ that maximizes this expression for which the only potential non-trivial entries are given by $Q_X(1),\dots, Q_X(m^\star-1)$. In this case, the maximization is equivalent to
\begin{eqnarray}
\max_{  \sum_{i=1}^{m^\star -1} Q_{X}\left(i\right) = 1} \left(\sum_{x=1}^{m^\star -1} \sqrt{P_{X}\left(x\right)Q_{X}\left(x\right)}\right)^2 \; .
\end{eqnarray}
This optimization problem can be solved via Lagrange multipliers, i.e. one has to optimize the function
\begin{eqnarray}
\sum_{x=1}^{m^\star -1} \sqrt{P_{X}\left(x\right)Q_{X}\left(x\right)}- \lambda \left(\sum_{x = 1}^{m^\star -1} Q_{x}\left(x\right) -1\right) \; .
\end{eqnarray}
The conditions that the optimal solution has to satisfy are given by
\begin{eqnarray}
\frac{\sqrt{P_{X}\left(x\right)}}{2\sqrt{Q_{X}\left(x\right)}} &=& \lambda \quad \forall x \in {1,\dots, m^\star -1}\\
\sum_{x = 1}^{m^\star -1} Q_{X}\left(x\right) &=& 1 \; .
\end{eqnarray}
Combining these conditions yields the constraint
\begin{eqnarray}
\sum_{x=1}^{m^\star -1}  \frac{P_{X}\left(x\right)}{4\lambda^2} = 1 \; .
\end{eqnarray}
We thus have that 
\begin{eqnarray}
    4\lambda^2 = \sum_{x=1}^{m^\star-1} P_{Y}\left(i\right) \; .
\end{eqnarray}
The optimal solution for $Q_X$ is therefore given by
\begin{eqnarray}
Q_{X}\left(x\right) = \frac{P_{X}\left(x\right)}{\sum_{x=1}^{m^\star-1} P_{X}\left(x\right)} \quad \forall x \in {1,\dots, m^\star-1} \; ,
\end{eqnarray}
and the maximal achievable fidelity is $\sum_{x=1}^{m^\star-1} P_{X}\left(x\right)$. However, since $m^\star$ is the optimal solution to Eq.~(\ref{optprobHart}), it must follow that 
\begin{eqnarray}
    \operatorname{F}  (P_X,Q_X) = \sum_{x=1}^{m^\star -1} P_{X}\left(x\right) < 1-\epsilon^2 \; .
\end{eqnarray}
This is equivalent to the inequality
\begin{eqnarray}
    \operatorname{P}  (P_X,Q_X) > \epsilon \; , 
\end{eqnarray}
which implies that there exists no distribution with rank $m^\prime < m^\star$ that is $\epsilon$-close to $P_X$, and thus ${H_{0}^{\epsilon} \left(X\right)_P \geq \log_{2} \left(m^\star\right)}$.
\end{proof}

We now prove the asymptotic behavior of our lower bound for the rate, using the classical AEP from~\cite{holenstein2006randomness}. In~\cite{holenstein2006randomness}, they consider a variation of the smoothed Hartley entropy, which implicitly uses the generalized trace distance (for a formal definition see~\cite[Definition 3.4]{Tomamichel2015QuantumIP}) as a distance measure rather than the purified distance. These distance measures can be related to each other via Fuchs-van de Graaf-like inequalities~\cite[Lemma 3.5]{Tomamichel2015QuantumIP}. In particular one then has that
\begin{align} \label{Eq: SanwichDifferentHartleys}
   \tilde{H}_{0}^{\epsilon}(X^{n})_{P} \leq H_{0}^{\epsilon}(X^{n})_{P} \leq \tilde{H}_{0}^{\epsilon^2/2}(X^{n})_{P} \; ,
\end{align}
where $\tilde{H}_{0}^{\epsilon}(X^{n})_{P}$ denotes an alternate definition for the smoothed Hartley entropy from~\cite{holenstein2006randomness}, which uses the generalized trace distance.
\begin{lem}  
Let $ \rho_{AB}^{\otimes n} \in S_{=} \left(\mathcal{H}_{A^nB^n}  \right)$ be a normalized Bell-diagonal density matrix, whose eigenvalues are described by a probability distribution $P_{X^n}$. Then
\begin{equation}
    \lim_{\epsilon_1, \epsilon_2 \to 0}  \lim_{n \to \infty}  \frac{n-\ceil*{H_{0}^{\epsilon_{1}}(X^{n})_{P}-2\log_{2}\left(\epsilon_{2}\right) }}{n} = -H(A|B)_\rho \; .
\end{equation}
\end{lem}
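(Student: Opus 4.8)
The plan is to reduce the statement to the classical asymptotic equipartition property (AEP) for the smooth Hartley entropy. This is legitimate because a Bell-diagonal state $\rho_{AB}^{\otimes n}$ is diagonal in a fixed product basis and is therefore completely described by its eigenvalue distribution $P_{X^n}=P_X^{\otimes n}$, where $P_X$ collects the four Bell weights of $\rho_{AB}$. First I would rewrite the von Neumann side of the identity: for a two-qubit Bell-diagonal state the reduced state on $B$ is maximally mixed, so $H(B)_\rho=1$, and hence $-H(A|B)_\rho = H(B)_\rho-H(AB)_\rho = 1-H(AB)_\rho = 1-H(X)_P$, using that the von Neumann entropy of a Bell-diagonal state equals the Shannon entropy of its (single-copy) eigenvalue distribution. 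So it suffices to prove that the left-hand side tends to $1-H(X)_P$.

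Next I would strip off the sub-leading terms. Since
\begin{equation}
\frac{n-\ceil*{H_{0}^{\epsilon_{1}}(X^{n})_{P}-2\log_{2}\epsilon_{2}}}{n}=1-\frac{\ceil*{H_{0}^{\epsilon_{1}}(X^{n})_{P}-2\log_{2}\epsilon_{2}}}{n},
\end{equation}
and the ceiling adds at most $1$ while $-2\log_{2}\epsilon_{2}$ is independent of $n$, both contributions vanish after dividing by $n$ and letting $n\to\infty$. Hence for each fixed $\epsilon_1,\epsilon_2$ the inner limit equals $1-\lim_{n\to\infty}\tfrac{1}{n}H_{0}^{\epsilon_{1}}(X^{n})_{P}$, provided that limit exists.

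Finally I would invoke the AEP. Combining the sandwich bound in Eq.~(\ref{Eq: SanwichDifferentHartleys}), namely $\tilde H_{0}^{\epsilon_1}(X^{n})_{P}\le H_{0}^{\epsilon_1}(X^{n})_{P}\le \tilde H_{0}^{\epsilon_1^2/2}(X^{n})_{P}$, with the classical AEP of Ref.~\cite{holenstein2006randomness} (equivalently the fully quantum AEP of Ref.~\cite{5319753} applied to the classical state $\sum_x P_{X^n}(x)\ketbra{x}{x}$), which gives $\lim_{n\to\infty}\tfrac{1}{n}\tilde H_{0}^{\delta}(X^{n})_{P}=H(X)_P$ for every fixed $\delta\in(0,1)$, a squeeze yields $\lim_{n\to\infty}\tfrac{1}{n}H_{0}^{\epsilon_1}(X^{n})_{P}=H(X)_P$ independently of $\epsilon_1$. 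Plugging this back, the inner limit equals $1-H(X)_P$ for all $\epsilon_1,\epsilon_2$, so the outer limit $\epsilon_1,\epsilon_2\to0$ is trivial and we obtain $1-H(X)_P=-H(A|B)_\rho$, as claimed.

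The only genuinely delicate point — and the step I expect to need the most care — is the exact form of the AEP one is entitled to cite: whether Ref.~\cite{holenstein2006randomness} directly yields the $\delta$-independent single limit $\tfrac{1}{n}\tilde H_{0}^{\delta}\to H$, or only the double limit $\lim_{\delta\to0}\lim_{n}$. If only the latter is available, the argument still goes through by keeping both outer limits: the squeeze gives $H(X)_P=\lim_{\epsilon_1\to0}\liminf_{n}\tfrac{1}{n}H_{0}^{\epsilon_1}(X^{n})_{P}\le \lim_{\epsilon_1\to0}\limsup_{n}\tfrac{1}{n}H_{0}^{\epsilon_1}(X^{n})_{P}=H(X)_P$, which is exactly the nested limit in the statement. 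Everything else — the marginal computation and the vanishing of the $1/n$ and $(\log_{2}\epsilon_{2})/n$ terms — is routine.
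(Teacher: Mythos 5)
Your proposal is correct and follows essentially the same route as the paper's proof: the same reduction of $-H(A|B)_\rho$ to $1-H(X)_P$ via $H(B)_\rho=1$ for Bell-diagonal states, the same dismissal of the ceiling and $\log_2\epsilon_2$ terms, and the same squeeze between $\tilde H_0^{\epsilon}$ and $\tilde H_0^{\epsilon^2/2}$ combined with the classical AEP of Ref.~\cite{holenstein2006randomness} (the paper invokes it through~\cite[Lemma 3]{RenWol05}, which gives the double-limit form you anticipate in your final paragraph). Your extra care about whether the AEP yields a $\delta$-independent single limit or only the nested limit is a sound refinement but does not change the argument.
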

\begin{proof}
    As was shown in~\cite[Lemma 3]{RenWol05}, the results from~\cite{holenstein2006randomness} imply that 
\begin{align}
   \lim_{\epsilon \to 0}  \lim_{n \to \infty} \frac{\tilde{H}_{0}^{\epsilon}(X^{n})_{P}}{n} = \lim_{\epsilon \to 0}  \lim_{n \to \infty} \frac{\tilde{H}_{0}^{\epsilon^2/2}(X^{n})_{P}}{n} =  H(X)_{P}\; .
\end{align}
From Eq.~(\ref{Eq: SanwichDifferentHartleys}) and $H(X)_{P} = H(AB)_\rho$, it then follows that 
\begin{align}
   \lim_{\epsilon_1, \epsilon_2 \to 0}  \lim_{n \to \infty}  \frac{n-\ceil*{H_{0}^{\epsilon_{1}}(X^{n})_{P}-2\log_{2}\left(\epsilon_{2}\right) }}{n} &= 1 - H(X)_{P} \\
   & = 1 - H(AB)_\rho \\
   & = -H(A|B)_\rho \; .
\end{align}
Moreover, the last equation holds due to the chain rule ${H(A|B)_\rho = H(AB)_\rho - H(B)_\rho}$ and the fact that $H(B)_{\rho} =1$ for Bell-diagonal states. 
\end{proof}
\subsection{Modified One-Way Hashing Method} \label{Subsection: ModifiedProtocol}
\begin{tcolorbox}[size=small, title=Modified One-Way Hashing Method]
Single Round (on $n$ qubits): 
\begin{steps} [leftmargin=1.4cm] %\label{EntDist}
  \item Alice and Bob generate a uniformly random bitstring $S = s_{1}\cdots s_{n-k}$ where each $s_{j}$ represents two bits. Let $s_{j^{\star}}$ represent the first non-zero 2-bit string.%\footnote{If no such 2-bit string exists, one can either generate a new bitstring $S$ or randomly select a value for $j^{\star}$, and jump directly to Step 4.}
  \item For all $j \in \{ 1,\dots, n-k \}$, if
  \SubItem {$s_{j} = 00$, no actions are required at Step 2.}
   \SubItem {$s_{j} = 01$ and $j \neq j^\star$, no actions are required at Step 2. If $j =  j^\star$, then both Alice and Bob apply a $\pi/2$-rotation around the y-axis on their half of the $j$'th qubit pair.}
  \SubItem {$s_{j} = 10$ and $j \neq j^\star$ then both Alice and Bob apply a $\pi/2$-rotation around the y-axis on their half of the $j$'th qubit pair. If $j =  j^\star$, no actions are required at Step 2.}
  \SubItem {$s_{j} = 11$ and $j \neq j^\star$, then Alice and Bob, respectively, apply a $3\pi/2$- and $\pi/2$-rotation around the x-axis on their half of the $j$'th qubit pair. If $j = j^\star$, then Alice and Bob, respectively, apply a $3\pi/2$- and $\pi/2$-rotation around the z-axis on their half of the $j$'th qubit pair. }
  \item For all $j \neq j^\star$ s.t. $s_{j} \neq 00$, both Alice and Bob apply a CZ gate on qubit pairs $j$ and $j^\star$, where pair $j^\star$ contains the target qubits.
  \item Alice and Bob both apply a $\pi/2$-rotation around the y-axis on their half of the target qubit pair $j^\star$. Then they measure the target qubit pair in the computational basis and discard said pair. Alice broadcasts her measurement outcomes to Bob.
\end{steps}
Post-processing: After all rounds are concluded, Alice and Bob share a Bell-diagonal state. Based on the initial (pre-protocol) bipartite state and all past joint measurement outcomes, Bob applies single-qubit Pauli gates such that the largest eigenvalue of the post-processed state corresponds to multiple copies of the Bell state $\ketbra{\phi^{+}}{\phi^{+}}_{AB}$. 

\end{tcolorbox}

Acting on $n$ qubit pairs, a single round of the original one-way hashing method first generates a uniformly random string ${S = s_1 \cdots s_n}$. For $n$ qubit pairs, any (single) error we consider can be described as a $2n$-bitstring, ${X = x_1 \cdots x_n}$, using the following convention. If $x_i = ab$, then the error on the $i$'th pair is described by the Pauli gate $Z_{A}^a X_{A}^b$, which acts solely on Alice's system, and the resulting $i$'th pair is given by the Bell state $  Z^a_{A} X^b_{A} \otimes \mathds{1}_{A}\ket{\phi^{+}}$. Due to this connection, one typically labels the four potential Bell states via: 
\begin{align}
\begin{aligned}
    00 \mapsto \ket{\phi^{+}}:= \frac{1}{2}  \left[ \ket{00} + \ket{11}\right]\\
    10 \mapsto \ket{\phi^{-}}:= \frac{1}{2}  \left[ \ket{00} - \ket{11}\right] \\
    01 \mapsto \ket{\psi^{+}}:= \frac{1}{2}  \left[ \ket{01} + \ket{10}\right]\\
    11 \mapsto \ket{\psi^{-}}:= \frac{1}{2}  \left[ \ket{01} - \ket{10}\right] 
\end{aligned}
\end{align}
The first bit is referred to as the ``phase" bit and the later is the ``amplitude" bit. The key insight from~\cite{Bennett1996MixedstateEA} is that if a single round of the one-way hashing method acts on a state with error $X$, then the Boolean inner product
\begin{align}
    S \cdot X
\end{align}
can be determined from the measurement outcomes of that single round. As such, for any observed value $S \cdot X$, Alice and Bob will discard any error which could not have produced this output. It was shown in~\cite{Bennett1996MixedstateEA} that, on average, half of the errors will be discarded per round, and we implicitly use this fact in Proposition \ref{BenDistillationError}. 

The modified one-way hashing method can also be used to calculate the Boolean inner product and therefore the analytical bounds from this work hold for this protocol as well. To see how $S \cdot X$ is determined, we now discuss how the modified protocol acts on any error $X$. For $j\neq j^\star$, the protocol remains identical to the one-way hashing in the first two steps. As such, the analysis from~\cite{Bennett1996MixedstateEA} still holds, and for any $j \neq j^\star$ and $s_{j} \neq 00$, the information $s_{j} \cdot x_{j}$ is stored in the amplitude bit of the $j$-th qubit pair. For the target pair, it can readily be verified that after Step $2$ the value $s_{j^\star} \cdot x_{j^\star}$ is stored in the phase bit of the target qubit pair. 

Alice and Bob applying CZ gates between the pairs $j$ and $j^\star$ is described by the mapping
\begin{align}
\begin{aligned}
    \text{Target:}& \ a_{j^\star}b_{j^\star} \mapsto \left(a_{j^\star} \oplus b_{j}\right) b_{j^\star}\\
\text{Control:}& \ a_{j} \ b_{j} \: \,   \mapsto \left(a_{j} \oplus b_{j^\star}\right) b_j \; .
\end{aligned}
\end{align}
In particular, this ensures that, after Step $3$, the relevant amplitude bits are added to the phase bit of the target pair. By virtue of the process done in Step $2$, it then follows that the phase bit of the target pair is equal to $S \cdot X$. The bilateral rotation in Step $4$ simply ensures that the phase and amplitude bit of the target pair are flipped. If $S \cdot X = 0$, both parties achieve the same measurement output. Conversely, if $S \cdot X = 1$, then Alice and Bob measure opposite outcomes.
\subsection{Connection Between One-Way Hashing Method and Quantum Error Correcting Codes} \label{ECCConnection}
For $n=5$, one can for example correct all first order errors by using the strings
\begin{eqnarray}
    S_1 &=& 01 \ 01 \ 01 \ 01 \ 00 \\
    S_2 &=& 10 \ 10 \ 10 \ 00 \\
    S_3 &=& 01 \ 11 \ 01 \\
    S_4 &=& 01 \ 10 \; .
\end{eqnarray}
Note that, akin to the $[[5,1,3]]$ QECC, this is the lowest number of pairs for which all first-order errors can be corrected. Conversely, if one only wants to detect all first-order errors, the corresponding strings are 
\begin{eqnarray}
    S_1 &=& 11 \ 11 \ 11 \ 11 \\
    S_2 &=& 11 \ 11 \ 11  \; .
\end{eqnarray}
If no errors are detected, both parties will share two Bell pairs with a global fidelity of order $1-O\left(\left(1-W\right)^2\right)$. Analogously to the recurrence method, they discard the post-measurement state if an error is detected. At least $4$ pairs are needed to detect all first-order errors, as it works analogously to the $[[4,2,2]]$ stabilizer code.
 \hfill

 \hfill

  \hfill

 \hfill

  \hfill

 \hfill

  \hfill

 \hfill

  \hfill

 \hfill

 \hfill
%\pagebreak
%\printbibliography
\bibliography{apssamp}% Produces the bibliography via BibTeX.

\end{document}